\documentclass[a4paper,UKenglish,cleveref,autoref,thm-restate]{lipics-v2021}
\usepackage{graphicx} % Required for inserting images
\usepackage{todonotes}
\usepackage{nicefrac}
\usepackage{xspace}
\usepackage[basic]{complexity}
\usepackage{cite}
\usepackage{mathtools}

\newcommand{\FPT}{\ComplexityFont{FPT}}
\newcommand{\XP}{\ComplexityFont{XP}}

\newcommand{\MSBM}{{\normalfont\textup{\textsc{SBMP}}}}
\newcommand{\minMSBM}{{\normalfont\textup{\textsc{min-SBMP}}}}

\newcommand{\OPT}{\ensuremath{\text{\normalfont\textsc{Opt}}}\xspace}
\newcommand{\tw}{\ensuremath{\mathrm{tw}}}

\newcommand{\WMFS}{\textup{\textsc{Weighted Monotone Formula Satisfiability}}}
\newtheorem{problem}{Problem}
\newclass{\paraNP}{para\text{-}NP}

\usetikzlibrary{patterns.meta}

\title{Matching with Multiple Bottlenecks: Parameterized Complexity and Approximation}

\author{Jonas Friemel}{Bochum University of Applied Sciences, Germany}{jonas.friemel@hs-bochum.de}{https://orcid.org/0009-0009-6270-4779}{Funded by the Deutsche Forschungsgemeinschaft (DFG, German Research Foundation) -- 530918134 (project SpaceAnts).}
\author{Tilo Hoitz}{TU Berlin, Germany}{hoitz@campus.tu-berlin.de}{https://orcid.org/0009-0008-2083-5466}{}
\author{Phillip Keldenich}{TU Braunschweig, Germany}{keldenich@ibr.cs.tu-bs.de}{https://orcid.org/0000-0002-6677-5090}{}
\author{Arne Schmidt}{TU Braunschweig, Germany}{aschmidt@ibr.cs.tu-bs.de}{https://orcid.org/0000-0001-8950-3963}{}

\authorrunning{J. Friemel, T. Hoitz, P. Keldenich and A. Schmidt}

\Copyright{Jonas Friemel, Tilo Hoitz, Phillip Keldenich and Arne Schmidt}
\ccsdesc[500]{Theory of computation~W hierarchy}
\ccsdesc[500]{Theory of computation~Problems, reductions and completeness}
\ccsdesc[300]{Theory of computation~Fixed parameter tractability}

\keywords{Matching, parameterized complexity, approximation} %TODO mandatory; please add comma-separated list of keywords

\category{} %optional, e.g. invited paper

\relatedversion{}

\nolinenumbers %uncomment to disable line numbering

\hideLIPIcs

\begin{document}

\maketitle
\begin{abstract}
	We consider a matching problem in which the cost of each edge is a vector with $k$ components.
	The cost of a matching is the sum of the bottlenecks over all components, and we ask whether there is a perfect matching of cost at most some value $Z$.
	This type of matching has applications in heavily synchronized job-shop scheduling problems and in reconfiguration problems,
	where movement is restricted to a single direction per step.
	In this paper, we analyze the problem from a parameterized complexity perspective and provide various results including $\FPT$-membership for parameters $k$ and $Z$ combined,
	as well as $\W[\P]$-membership and $\W[\SAT]$-hardness for each of the two parameters individually.
	The reduction also implies $\paraNP$-hardness parameterized by either maximum degree or treewidth.
	We further show hardness of approximation within a super-logarithmic factor for the optimization variant and provide a $\nicefrac k d$-approximation algorithm for any constant $d \leq k$ as well as an efficient approximation scheme parameterized by $k$.
	With parameter $Z$, we show that no $\FPT$-time $F(Z)$-approximation algorithm is possible for any computable function $F$, unless $\W[1] = \FPT$.
\end{abstract}
\newpage

% !TeXroot = main.tex
\section{Introduction}
Imagine inviting a group of friends to test several multi-course menus.
Each menu consists of a fixed sequence of courses, and you have prepared one menu for each friend.
The only decision left is which friend should receive which menu.
There is, however, a deadline: an event that you all want to attend starts soon.
Moreover, eating a course takes a different amount of time for different friends, and you can only begin the next course once everyone has finished the current one.
How should the menus be assigned to the friends so that the entire dinner is completed as early as possible? See~\Cref{tab:menu} for an example.

{
\newcommand{\dotrect}[2]{
\begin{tikzpicture}[baseline={(box.base)}]
	\node[
	draw,                           
	pattern={Dots[distance=1.5mm, radius=.125mm]},   
	minimum width=#1mm,              
	minimum height=.5cm             
	] (box) {#2};
\end{tikzpicture}
}
\newcommand{\hatchrect}[2]{
\begin{tikzpicture}[baseline={(box.base)}]
	\node[
	draw,                           
	pattern={Hatch[distance=2mm, angle=45, line width=0.1pt]}, 
	minimum width=#1mm,              
	minimum height=.5cm              
	] (box) {#2};
\end{tikzpicture}
}
\newcommand{\linerect}[2]{
\begin{tikzpicture}[baseline={(box.base)}]
	\node[
	draw,                           
	pattern={Lines[distance=2mm, angle=45, line width=0.1pt]},  
	minimum width=#1mm,              
	minimum height=.5cm              
	] (box) {#2};
\end{tikzpicture}
}

\begin{table}[h]
	\caption{Top: A possible instance of menus and eating times. Each rectangle represents a dish and the fill patterns of the rectangles represent the courses. Bottom: The optimal solution of cost~6. Each course takes 2 time units.}
	\label{tab:menu}
	\centering
	\begin{tabular}{l|ccc}
		\textbf{Instance} & Menu 1 & Menu 2 & Menu 3\\\hline
		Friend 1\rule[-5pt]{0pt}{17pt} & \dotrect{8}{2}\hatchrect{4}{1}\linerect{12}{3} &\dotrect{8}{2}\hatchrect{8}{2}\linerect{8}{2}&\dotrect{8}{2}\hatchrect{8}{2}\linerect{12}{3}\\%\hline
		Friend 2\rule[-5pt]{0pt}{17pt} &\dotrect{4}{1}\hatchrect{4}{1}\linerect{8}{2} &\dotrect{8}{2}\hatchrect{4}{1}\linerect{12}{3}&\dotrect{8}{2}\hatchrect{4}{1}\linerect{12}{3}\\%\hline
		Friend 3\rule[-5pt]{0pt}{17pt} & \dotrect{4}{1}\hatchrect{12}{3}\linerect{8}{2} &\dotrect{8}{2}\hatchrect{4}{1}\linerect{4}{1}&\dotrect{8}{2}\hatchrect{8}{2}\linerect{8}{2}%\\\hline
	\end{tabular}
	\medskip

	\begin{tabular}{l|lll}
	\textbf{Solution}& Course 1 & Course 2 & Course 3\\\hline
	Friend 1 (Menu 2)\rule[-5pt]{0pt}{17pt} & \dotrect{10}{2}&\hatchrect{10}{2}&\linerect{10}{2}\\
	Friend 2 (Menu 1)\rule[-5pt]{0pt}{17pt} &\dotrect{5}{1}&\hatchrect{5}{1}&\linerect{10}{2} \\
	Friend 3 (Menu 3)\rule[-5pt]{0pt}{17pt} & \dotrect{10}{2}&\hatchrect{10}{2}&\linerect{10}{2}
	\end{tabular}
\end{table}

}

This setting can be identified with a version of the job-shop problem.
In the job-shop scheduling problem (JSSP), we are given $n$ jobs, each consisting of $m$ operations $O_1, \dots, O_m$ with varying processing times, and a station with $m$ machines with varying processing power.
The task is to schedule the jobs on the machines to minimize the makespan, i.e., the time when the last operation finishes.
In our setting, however, we are given $n$ \emph{synchronized} stations, each with a single machine\footnote{We could also assume $m$ machines in a station with operation $i$ running on machine $i$ only.},
on which the $i$-th operation can only start once all machines have finished the $(i-1)$-st operation. The task is to distribute the jobs over all $n$ stations, i.e., we want to find a matching between jobs and stations that minimizes the makespan.
We obtain the following matching decision problem (see~\Cref{fig:matching} for an example).

\begin{problem}[{Sum of Bottleneck Matching Problem}, \MSBM]~%
\begin{description}
	\item[Given:] ~%
	 \begin{itemize}
	 	\item A bipartite graph $G = (S\,\dot\cup\, T, E)$,
	 	\item a vector-valued cost function $c: E\rightarrow \mathbb{N}_0 ^k, e \mapsto (c_1(e),\ldots,c_k(e))$, $k\in \mathbb N$,
	 	\item an integer $Z\in \mathbb N$.
	 \end{itemize}
	\item[Question:] Is there a perfect matching $M \subseteq E$ such that
	$\sum_{i=1}^k \max_{e\in M} c_i(e) \leq Z$?
\end{description}
\end{problem}

\begin{figure}
	\centering
	\includegraphics{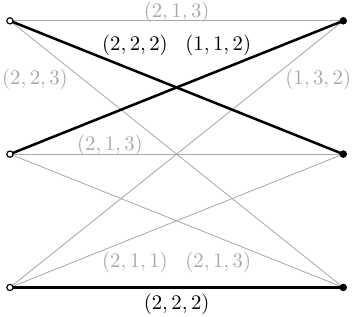}
	\caption{The menu assignment problem from \cref{tab:menu} as an $\MSBM$ instance, with friend nodes on the left and menu nodes on the right. Bold black edges are a solution for $Z=6$.}
	\label{fig:matching}
\end{figure}

This problem is a generalization of the matching problem under categorization constraints studied by Richey and Punnen~\cite{richey1992minimum}.
They provide \NP-hardness results even when restricted to outerplanar graphs; however, no parameterized complexity analysis is given.

This matching problem also arises in the context of reconfiguration problems, where movement is restricted to a single direction per step.
This is the case for tilt problems~\cite{avilajimenez2026generalcomputation,caballero2020hardness,caballero2021fast,caballero2025uniform}, where a set of passive tiles is globally controlled (e.g., by magnetic or gravitational forces being applied to all tiles simultaneously), or for linear-strength robots that are able to push a whole line of robots in a single move~\cite{almethen2020pushing,almethen2022connectedpushing,almethen2023distributed}.
In the latter problems, the authors of these works considered pushing either a single line in each step or multiple different lines with no direction restriction, but an intermediate variant could allow pushing an arbitrary number of lines in a fixed direction per step to avoid complicated collision prevention.
The general task in reconfiguration problems is to move these tiles or robots from a given start configuration to a given target configuration.
A common technique is to compute a bottleneck matching between start and target positions to obtain a lower bound on the number of parallel controls needed;
see, e.g., \cite{fekete2023connected, fekete2024efficiently} for coordinated robot motion planning.
However, no work has considered adapting this approach to models where different agents cannot move in different directions simultaneously.
Therefore, to compute a better lower bound, instead of a single cost value, we could assign to each edge in the start-target graph a tuple of four values for the required up, down, left, and right movement.
We then want to find a perfect matching of minimum cost, resulting in the following natural minimization variant of \MSBM.

\begin{problem}[{Minimum Sum of Bottleneck Matching Problem}, \minMSBM]~%
	\begin{description}
		\item[Given:] ~%
		\begin{itemize}
			\item A bipartite graph $G = (S\,\dot\cup\, T, E)$,
			\item a vector-valued cost function $c: E\rightarrow \mathbb{N}_0 ^k, e \mapsto (c_1(e),\ldots,c_k(e))$, $k\in \mathbb N$.
		\end{itemize}
		\item[Wanted:] A perfect matching $M\subseteq E$ such that
		$\sum_{i=1}^k \max_{e\in M} c_i(e)$ is minimum.
	\end{description}
\end{problem}

In this paper, we consider both the decision and the optimization versions of this matching problem and show the following results.

\begin{description}
	\item[{Fixed-parameter tractability}] \MSBM{} can be decided in time $O((Z+1)^k n^{2.5})$, i.e., it is in $\FPT$ when parameterized by~$k$ and~$Z$ simultaneously (\cref{lem:fpt}).
	\item[{\W[\P]-membership}] \MSBM{} is in $\W[\P]$ when parameterized by either $Z$ or $k$ (\cref{th:wp-membership-k,th:wp-membership-Z}),
                                   i.e., it can be decided in \FPT-time by making at most $O(g(Z) \log n)$ or $O(g(k) \log n)$ non-deterministic guesses for some function $g$.
	\item[{\W[\SAT]-hardness}] \MSBM{} is $\W[\SAT]$-hard when parameterized by either $Z$ or $k$,
                                   even on planar graphs (\cref{th:wsatZ,th:wsatK}), and we thus cannot hope for an \FPT-time algorithm unless $\FPT=\W[\SAT]$.
	\item[{\paraNP-hardness}]  Unless $\P=\NP$, $\MSBM$ parameterized by the maximum degree $\Delta$ of the input graph is not in $\XP$,
                                   i.e., it cannot be decided in time $O(f(\Delta)\cdot n^{g(\Delta)})$ for computable functions~$f$ and~$g$ (\Cref{th:maxDegree}).
                                   Similarly, unless $\P=\NP$, $\MSBM$ parameterized by the treewidth~$\tw$ of the input graph is not in $\XP$ (\cref{cor:tw}).
	\item[Approximability]     For each constant integer $1 \leq d \leq k$, we provide a polynomial-time $\nicefrac k d$\nobreakdash-approximation for \minMSBM{} (\cref{th:k-approx}).
                                   Furthermore, we provide an efficient parameterized approximation scheme (EPAS),
                                   i.e., a $(1+\varepsilon)$-approximation scheme running in \FPT-time with parameter $k + \nicefrac{1}{\varepsilon}$ (\cref{th:approx-scheme}).
	\item[Approximation bounds] Unless $\P = \NP$, \minMSBM{} is hard to approximate within a factor of $k^{\nicefrac{1}{(\log \log k)^{O(1)}}}$, even on planar graphs (\cref{cor:approx-hard}), and, unless $\W[1] = \FPT$, there is no $\FPT$-approximation algorithm for $\minMSBM$ parameterized by $Z$ (\cref{th:noEpasZ}).
\end{description}

\subsection{Related work}
A sum-of-bottlenecks variant of the assignment problem has been studied in~\cite{aggarwal1986bottleneck,punnen2004bottleneck}.
Here, $n$ jobs have to be distributed to $m$ machines, and the objective is to minimize the sum, over all machines, of the longest jobs on each machine.
Heuristic algorithms and \NP-hardness results are provided for this type of problem.
However, the jobs are not synchronized across steps, and no parameterized complexity analysis has been done so far.

The matching variant closest to our problem has been studied in~\cite{richey1992minimum}, where the edges of a graph are partitioned into $k$ categories
and the objective is to find a matching in the graph such that the sum of bottleneck values over all categories is minimized.
This matches our vector-based definition but imposes the additional restriction that for each edge, all entries except one are 0.
Other work considered the same constraints for finding shortest paths~\cite{averbakh1994categorized}, a TSP tour~\cite{punnen1992traveling}, spanning trees~\cite{richey1992minimum}, and minimum cuts~\cite{hassin2007complexity}.
However, only heuristic approaches and \NP-hardness results were given;
we provide a thorough analysis of parameterized complexity and (in-)approximability.

Matchings have been researched for decades in numerous variants.
Thus, we can only give a brief overview.
For a more detailed survey of different algorithms and their comparison, refer to~\cite{2014_Duan_Pettie}.
The classic maximum-cardinality matching problem has been considered for bipartite graphs, e.g., \cite{1973_Hopcroft_Karp,1997_Goldberg_Kennedy,2004_Goldberg_Karzanov}, and for general graphs, e.g., \cite{1965_Edmonds,1980_Micali_Vazirani}.
By adding edge weights, the task often becomes finding a maximum matching of minimum total weight, which can also be computed in polynomial time, see, e.g.,~\cite{1955_Kuhn,1989_Gabow_Tarjan,1997_Goldberg_Kennedy}.
Closer to our scenario is the bottleneck matching problem, where the weight of the heaviest edge in the matching is to be minimized.
Without further constraints, this problem is still solvable in polynomial time~\cite{1988_Gabow_Tarjan}, even in the Euclidean plane~\cite{1992_Chang_Tang_Lee,2000_Efrat_Katz}. However, if one requires non-crossing edges in geometric settings, this problem becomes \NP-hard, but admits a constant-factor approximation~\cite{2014_Katz_et_al}.

Another closely related problem is the Generalized Assignment Problem (GAP), where jobs have to be assigned to capacity-constrained agents while minimizing the total resource consumption~\cite{1975_Ross_Soland}.
The GAP is \NP-hard \cite{1986_Fischer_Jaikumar_Van_Wassenhove}.
However, variations of this problem have been closely studied by many authors, e.g., \cite{1992_Catrysse_Van_Wassenhove}, as the GAP finds use in various scenarios such as vehicle routing \cite{1981_Fischer_Jaikumar} or the facility location problem \cite{1977_Ross_Soland}.

Variants of the GAP include the Task Bottleneck GAP, where the most costly job over all agents is to be minimized, or the Agent Bottleneck GAP, where the maximum resource consumption of the agents is to be minimized.
An algorithm for the first variant was developed by Mazzola and Neebe~\cite{1993_Mazzola_Neebe}.
Relaxations and heuristic approaches have been considered by Martello and Toth~\cite{1995_Martello_Toth} in 1995.
For the latter problem, the objective function is to minimize $\max_{(\text{agent } i)} \sum_{(\text{job } j)} c_{ij}x_{ij}$, whereas the objective in our matching problem is to minimize $\sum_{(\text{agent } i)} \max_{(\text{job } j)} c_{ij}x_{ij}$.
Here, $c_{ij}$ denotes the cost of assigning job~$j$ to agent~$i$, and~$x_{ij}$ denotes the binary choice of whether job~$j$ is assigned to agent~$i$.
A multi-resource variant of the GAP in which each agent has different resources that must not deplete is considered in~\cite{1991_Gavish_Pirkul}.
For this problem, several heuristic approaches exist~\cite{2001_Mazzola_Wilcox,2004_Yagiura_et_al,2009_Mitrovic-Minic_Punnnen}.
A~bottleneck version of the multi-resource GAP has been considered in~\cite{2012_Karsu_Azizoglu,2019_Bektur}.

On the scheduling side, we consider synchronized versions of the job-shop problem.
For a general overview of variants, see, e.g.,~\cite{XIONG2022105731,AGNETIS2025367}.
In the batched job-shop variant, operations of jobs can be organized in batches, and all operations of a batch complete only once its longest operation does~\cite{POTTS2000228,KNOPP201750}.
Makespan minimization on these machines is solvable in polynomial time, but minimizing the sum of completion times is only known to be solvable in time $O(n^{b(b-1)})$ for $n$ jobs and batch size $b$~\cite{brucker1998scheduling}.
This batching process can be seen as a synchronizing variant.
Miller and Davis~\cite{MILLER1989179} consider synchronized finishing times of batches and analyze two scheduling strategies:
(1) the operations do not allow wait times, (2) the operations must start synchronously.
The latter version is closely related to our problem.

% !TeXroot = main.tex
\section{Preliminaries}
In this section we provide basic definitions and complexity classes considered in this paper.
Let $G = (V,E)$ be a graph. A set $M\subseteq E$ is a \emph{matching} if $e\cap f = \emptyset$ for all $e,f\in M$ with $e \neq f$.
A matching $M$ is called \emph{perfect} if every node has an incident edge in $M$.
With the cost function $c: E\rightarrow \mathbb{N}_0 ^k, e \mapsto (c_1(e),\ldots,c_k(e))$, $k\in \mathbb N$, we refer to $c(e)$ as the \emph{cost vector} for an edge $e\in E$.
By \emph{cost entry} we refer to a single value $c_i(e)$ of the cost vector with $i\in [k] = \{1,\dots, k\}$.
For a matching $M$, the vector $\zeta(M) = (\zeta_1,\dots, \zeta_k)$ with
$\zeta_i = \max_{e\in M} c_i(e)$ is called the \emph{bottleneck vector} of $M$.
The \emph{cost $c(M)$} is defined by $c(M)=\sum_{i\in [k]} \zeta_i$.

A parameterized problem is a language $L\subseteq \Sigma^*\times \mathbb N$, where $\Sigma$ is a finite alphabet.
The second component is called the parameter of the problem and generally describes the size of a specific structure.
For our problem, we mainly consider two different parameters: (1)~the length $k$ of the cost vector, and (2)~the maximum cost $Z$ of the matching.

The complexity class $\FPT$ (resp.\ $\XP$) consists of all languages $L$ whose instances $(x,p)$ can be recognized in time $f(p)\cdot |x|^c$, for some computable function $f:\mathbb N\rightarrow \mathbb N$ and a constant $c\in \mathbb N$ (resp.\ in time $f(p)\cdot |x|^{g(p)}$ for computable functions $f$ and $g$).

In this paper, we also consider the complexity classes $\W[\SAT]$ and $\W[\P]$;
for a comprehensive treatment of parameterized complexity, we refer to~\cite{DowneyFellows2013}.
We provide membership and hardness results.
To prove hardness, we use \FPT-reductions.
A parameterized problem $L$ \FPT-reduces to another parameterized problem $L'$ if and only if we can map each instance $(x,p)\in \Sigma^* \times \mathbb N$ to an instance $(x',p')\in \Sigma^* \times \mathbb N$ such that
\begin{itemize}
	\item $(x,p)\in L$ if and only if $(x',p')\in L'$,
	\item $p'\leq g(p)$ for some computable function $g$, and
	\item the reduction is computable in time $f(p) \cdot |x|^c$ for some computable function $f$ and constant $c$.
\end{itemize}
A parameterized problem $L$ is called $\mathcal C$-hard for some parameterized complexity class $\mathcal C$ if every problem in $\mathcal C$ can be $\FPT$-reduced to $L$.
Like polynomial-time reducibility, $\FPT$-reducibility is transitive, implying that
if a parameterized problem $L'$ is $\mathcal C$-hard and can be $\FPT$-reduced to a parameterized problem $L$, then $L$ is also $\mathcal C$-hard.

The complexity class $\W[\SAT]$ contains all parameterized problems that can be $\FPT$-reduced to the weighted Boolean satisfiability problem with parameter $p$.
In this problem, we are given a Boolean formula and ask if we can set $p$ variables to true to satisfy the formula.
If the number of alternations of conjunctions and disjunctions can be bounded by some constant $t\in \mathbb N$, then the class is called $\W[t]$.
The complexity class $\W[\P]$ consists of all parameterized problems that are $\FPT$-reducible to the weighted circuit satisfiability problem with parameter $p$.
Here, a Boolean circuit with $n$ inputs is given, and we ask if we can set $p$ inputs to true so that the circuit outputs true.
Alternatively, $\W[\P]$-problems can be decided by a non-deterministic Turing machine that has at most $O(g(p)\cdot \log n)$ many non-deterministic choices and a total runtime of $O(f(p)\cdot n^c)$, where $f$ and $g$ are computable functions and $c$ is a constant~\cite{FlumGrohe2006}.

\begin{theorem}\label{th:wp-membership-k}
	\MSBM{} is in $\W[\P]$ when parameterized by the length $k$ of the cost vectors.
\end{theorem}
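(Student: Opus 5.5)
We use the machine characterization of $\W[\P]$ recalled in the preliminaries: it suffices to give a nondeterministic algorithm that makes $O(g(k)\log n)$ nondeterministic binary choices and runs in $f(k)\cdot n^{c}$ time overall. The key observation is that a solution is determined up to feasibility by its bottleneck vector. Let $M$ be a perfect matching with $\zeta(M)=(\zeta_1,\dots,\zeta_k)$. Each $\zeta_i$ equals $c_i(e)$ for some edge $e\in E$. Hence $\zeta_i$ is one of at most $|E|\le n^2$ values that actually occur as the $i$-th cost entry of some edge. So instead of guessing the matching, we guess the bottleneck vector.

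\textbf{Step 1 (guessing).} For each component $i\in[k]$, nondeterministically guess an edge $e_i\in E$ and set $b_i := c_i(e_i)$. Equivalently, guess an index into the sorted list of the distinct values $\{c_i(e) : e\in E\}$. Each guess needs $\lceil\log_2 |E|\rceil \le 2\lceil \log_2 n\rceil$ bits, so $k$ guesses use $O(k\log n)$ nondeterministic bits. This matches the required bound with $g(k)=k$.

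\textbf{Step 2 (deterministic verification).} Reject immediately if $\sum_{i} b_i > Z$. Otherwise, build the subgraph $G_b=(S\,\dot\cup\,T, E_b)$ with
\[ E_b=\{e\in E : c_i(e)\le b_i \text{ for all } i\in[k]\}. \]
Test whether $G_b$ has a perfect matching, for instance with Hopcroft--Karp in $O(|E|\sqrt{n})$ time, and accept if and only if one exists. Building $G_b$ takes $O(k|E|)$ time, so the whole deterministic part runs in time polynomial in the input size (note $k$ is part of the input).

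\textbf{Step 3 (correctness).} If a perfect matching $M\subseteq E_b$ exists, then $\zeta_i(M)\le b_i$ for every $i$, so $c(M)\le \sum_i b_i\le Z$. Conversely, suppose $M$ is a solution. Choose the branch with $b_i=\zeta_i(M)$ for each $i$; this is a valid guess because each $\zeta_i(M)$ is attained by some edge. Then $M\subseteq E_b$ and $\sum_i b_i = c(M)\le Z$, so this branch accepts.

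The only delicate point is encoding the guesses so that they really are $O(k\log n)$ bits, independent of the magnitude of the costs. This is exactly why we guess edges, or ranks among occurring values, rather than numbers up to $Z$: guessing numbers directly would cost $O(k\log Z)$ bits, which is not bounded as required when the parameter is $k$. With that encoding fixed, the bound follows from the characterization in~\cite{FlumGrohe2006}. Alternatively, one can turn the same idea into an explicit weighted circuit, with $k$ groups of selector inputs choosing the thresholds and a polynomial-size circuit for bipartite perfect matching. We prefer the machine characterization because it avoids that construction.
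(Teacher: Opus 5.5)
Your proof is correct and follows essentially the same route as the paper: guess one edge per cost component using $O(k\log n)$ bits, reject if the implied bottleneck sum exceeds $Z$, and test for a perfect matching in the threshold subgraph. The only difference is cosmetic: you set $b_i = c_i(e_i)$, whereas the paper takes $\zeta_i = \max_{e\in E_B} c_i(e)$ over all guessed edges; both choices work, and your remark on why one must guess edges rather than numeric values is a useful point that the paper leaves implicit.
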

\begin{proof}
	Let $G=(V,E)$ be the input graph.
	For each  $i \in [k]$, we guess one of $O(n^2)$ many edges representing the bottleneck edge for the $i$-th cost entry.
	This requires $O(k \log n)$ bits in total.
	Let $E_B$ be the set of edges guessed and let $\zeta=(\zeta_1,\dots, \zeta_k)$ be the guessed bottleneck vector with $\zeta_i = \max_{e\in E_B}(c_i(e))$.
	If $\sum_{i\in [k]} \zeta_i > Z$, then the guess was incorrect and we reject.
	Otherwise, add to $E_B$ every edge $e\in E$ with $c_i(e) \leq \zeta_i$ for all $i\in[k]$.
	If $G'=(V,E_B)$ has a perfect matching, we accept.
	Otherwise, we reject.

	By construction, we obtain a matching of cost at most $Z$ if the algorithm accepts.
	For the other direction, suppose there is a matching $M$ with cost at most $Z$.
    This cost is determined by a set $E_B$ of at most $k$ edges, and all other edges of $M$ must have a cost vector
    with entries that do not exceed the maximum corresponding entry among edges in $E_B$.
    Thus, all edges of $M$ are contained in $E_B$ and our algorithm accepts.
\end{proof}

\begin{theorem}\label{th:wp-membership-Z}
	\MSBM{} is in $\W[\P]$ when parameterized by the matching cost $Z$.
\end{theorem}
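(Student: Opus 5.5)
We use the same nondeterministic characterization of $\W[\P]$ as in \cref{th:wp-membership-k}. Now, however, $k$ is unbounded, so we can no longer guess one bottleneck edge per component. The key observation is that in a matching of cost at most $Z$, all entries of the bottleneck vector are non-negative integers summing to at most $Z$. Hence at most $Z$ components $i$ have $\zeta_i>0$, and every other component must have $\zeta_i=0$. We will therefore guess only the \emph{support} of the bottleneck vector together with its values.

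The algorithm is as follows. Nondeterministically guess a number $s\le Z$. Then guess indices $i_1,\dots,i_s\in[k]$ and values $z_1,\dots,z_s\in\{1,\dots,Z\}$ with $\sum_j z_j\le Z$; reject if this sum exceeds $Z$. Each index costs $O(\log k)$ bits and each value $O(\log Z)$ bits. Since $k$ is part of the input size, $\log k\le \log|x|$, so the total number of guessed bits is $O(Z\log|x| + Z\log Z)=O(g(Z)\log|x|)$. (If $k$ were given in unary it would be at most the input length anyway, because every cost vector lists $k$ entries.) Define $\zeta$ by $\zeta_{i_j}=z_j$ and $\zeta_i=0$ for all other $i$. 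Let $E'$ be the set of edges $e$ with $c_i(e)\le\zeta_i$ for all $i\in[k]$; computing it takes $O(k|E|)$ time, which is polynomial. Accept iff $G'=(V,E')$ has a perfect matching, which we check deterministically, e.g.\ with Hopcroft--Karp.

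For correctness, first suppose the algorithm accepts. The perfect matching $M\subseteq E'$ then satisfies $\max_{e\in M}c_i(e)\le\zeta_i$ for every $i$, so $c(M)\le\sum_i\zeta_i\le Z$. Conversely, let $M$ be a perfect matching with $c(M)\le Z$ and bottleneck vector $\zeta(M)$. Its support has size at most $Z$, each nonzero entry lies in $[1,Z]$, and the entries sum to at most $Z$. Thus one computation path guesses exactly $\zeta(M)$. On that path every edge of $M$ lies in $E'$, so the algorithm accepts.

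The only delicate step is the bit count: we must make sure the guessing depends only on $Z$ and $\log$ of the input size, not on $k$. This is handled by encoding the sparse support instead of the full vector, since $k$ itself is polynomially bounded by the input length. An alternative is to guess the at most $Z$ bottleneck edges, one for each nonzero component, but guessing indices and values directly is simpler.
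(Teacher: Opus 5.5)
Your proof is correct and takes the same approach as the paper. Both arguments use that a bottleneck vector of cost at most $Z$ has at most $Z$ nonzero entries, guess a compact description of it with $O(g(Z)\log n)$ nondeterministic bits, and then test the resulting threshold graph for a perfect matching. The only difference is what you guess: the paper guesses at most $\min(k,Z)$ witness edges (split into the cases $Z\ge k$ and $Z<k$) and reads the thresholds off those edges, whereas you guess index--value pairs directly, which needs your observation that $k$ is bounded by the input length and costs an extra, harmless $O(Z\log Z)$ bits.
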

\begin{proof}
	This proof is similar to the proof of \cref{th:wp-membership-k} for parameter $k$.
    If $Z \geq k$, we proceed analogously.
	Otherwise, if $Z < k$, the bottleneck vector of a matching of cost at most~$Z$ has at most~$Z$ non-zero entries.
	Thus, it suffices to guess $Z$ edges.
\end{proof}

% !TeXroot = main.tex
\section{\texorpdfstring{\boldmath $\W[\SAT]$}{W[SAT]}-hardness parameterized by the matching cost \texorpdfstring{\boldmath $Z$}{Z}}
In this section, we prove that $\MSBM$ is $\W[\SAT]$-hard when parameterized by the matching cost $Z$.
The idea is to inductively transform a Boolean formula that uses AND- and OR-operators into a planar graph.
In particular, we are given an instance of \WMFS{} (WMFS), i.e., a Boolean formula~$F$ with $n$ variables containing no negation, and a number $q \in \mathbb{N}$.
	We ask if there is a satisfying assignment with at most $q$ variables set to true.
	Parameterized by~$q$, this problem is $\W[\SAT]$-complete~\cite{AbrahamsonDowneyFellows1995}.
	We may assume that $F$ is a disjunction of conjunctions of disjunctions and so on, alternating.
	Otherwise, we may treat the conjuncts individually and combine the resulting graphs as disjoint components of the constructed~graph.

	We construct a planar bipartite graph~$G_F$ by induction over the structure of a subformula~$\phi$.
	See~\Cref{fig:gadgets} for the recursive structure, and \Cref{fig:reductionZ} for a full example.
	\begin{description}
		\item[Literal] If $\phi$ is a literal $x_i$, we add nodes $s_\phi$ and $t_\phi$, and an edge $\{s_\phi, t_\phi\}$ with cost $e_i$, i.e., the unit vector where the $i$-th entry is $1$ and all other entries are $0$.
		\item[Conjunction] If $\phi$ is a conjunction of formulas $\phi_1 \land \cdots \land \phi_\ell$, we identify $s_\phi = s_{\phi_1}$ and $t_\phi = t_{\phi_\ell}$.
		We add edges $\{t_{\phi_i}, s_{\phi_{i+1}}\}$ for all $1\leq i< \ell$ with cost $0$ in all entries. (See~\Cref{fig:gadgets} left.)
		\item[Disjunction] If $\phi$ is a disjunction of formulas $\phi_1 \lor \cdots \lor \phi_\ell$, we add nodes $s_\phi$, $t_\phi$, as well as $s_\phi^i$ and $t_\phi^i$ for $1\leq i\leq \ell$.
		We add edges $\{s_{\phi}, t_{\phi}^i\}$ and $\{s_{\phi}^i, t_{\phi}\}$, as well as $\{ s_{\phi_{i}}, t_{\phi}^i\}$ and $\{s_{\phi}^i, t_{\phi_{i}}\}$ for all $1\leq i\leq \ell$ with cost $0$ in all entries. (See~\Cref{fig:gadgets} right.)
	\end{description}

	\begin{figure}[p]
		\centering
		\includegraphics[page=2]{./figures/SATtoMBSM-gadgets}\hfil
		\includegraphics[page=1]{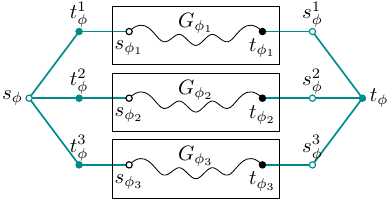}
		\caption{Examples of the graph construction for a given formula $\phi$. Left: $\phi = \phi_1\land \phi_2\land\phi_3$. Right: $\phi = \phi_1\lor \phi_2\lor\phi_3$.}
		\label{fig:gadgets}
	\end{figure}

	\begin{figure}[p]
		\centering
		\includegraphics{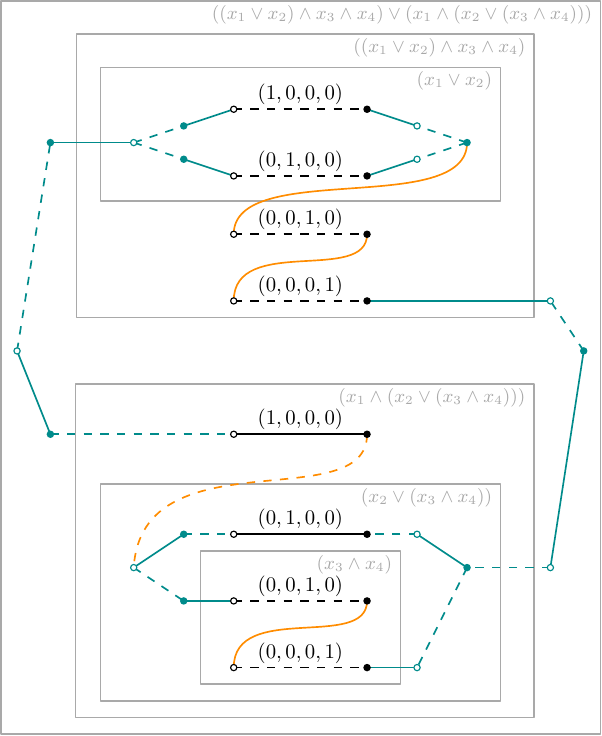}
		\caption{%
			The \MSBM{} instance graph~$G_F$ resulting from the transformation of the WMFS formula $F = {((x_1\lor x_2)\land x_3\land x_4)\lor (x_1\land (x_2\lor (x_3\land x_4)))}$ in the proof of \cref{th:wsatZ}.
			The nodes and edges of literals are black, those of conjunctions are orange, and those of disjunctions are green.
			The non-dashed edges highlight a matching of cost~$2$, which corresponds to a WMFS solution with only~$x_1$ and~$x_2$ set to true.
		}
		\label{fig:reductionZ}
		\end{figure}

	\begin{lemma}\label{lem:reductionruntime}
		Given a \textup{WMFS} instance $F$, $G_F$ can be constructed in polynomial time in $|F|$.
	\end{lemma}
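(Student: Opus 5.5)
The plan is to bound the size of $G_F$ by a linear function of $|F|$ via structural induction on subformulas, and then observe that each construction step takes polynomial time. Concretely, for a subformula $\phi$ let $N(\phi)$ and $M(\phi)$ denote the number of nodes and edges of the gadget $G_\phi$, and let $|\phi|$ be the number of occurrences of literals and operators (equivalently, the size of the parse tree). I will show by induction that $N(\phi) \le 4|\phi|$ and $M(\phi) \le 4|\phi|$, say.

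The base case is immediate: a literal contributes two nodes and one edge. For a conjunction $\phi_1\land\cdots\land\phi_\ell$, no new nodes are created (we only identify $s_\phi$, $t_\phi$ with existing nodes), and $\ell-1$ new edges are added. Since $\ell \le |\phi| - \sum_j |\phi_j|$ when each of the $\ell-1$ binary $\land$-operators is counted in $|\phi|$, the increment is charged to the operators. For a disjunction $\phi_1\lor\cdots\lor\phi_\ell$, we add $2\ell+2$ nodes and $4\ell$ edges. Again $\ell \le 1 + (\ell-1) \le |\phi| - \sum_j|\phi_j| + 1$, so the increment is a constant times the number of operators in $\phi$ outside its children (plus a constant). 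Summing over the parse tree gives $O(|F|)$ nodes and edges in total. If $F$ is not alternating, flattening it into alternating form and splitting the top-level conjuncts into disjoint components only shrinks or preserves this count. Each cost vector lies in $\{0,1\}^n$ with $n \le |F|$, so writing down all costs takes $O(|F|^2)$ space.

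For the running time, first parse $F$ into its alternating tree in polynomial time by merging nested operators of the same type. Then build the gadgets bottom-up, each step doing work linear in its local output. The only mildly delicate point, which I expect to be the main obstacle (and only a bookkeeping one), is the charging argument for disjunctions. The count depends on how $|\phi|$ is measured, i.e.\ whether an $\ell$-ary operator counts once or $\ell-1$ times. Counting the $\ell-1$ binary occurrences in the input string settles it. Alternatively, I can bound the number of children over the whole tree by the number of tree nodes, which is at most $|F|$, avoiding the issue entirely.
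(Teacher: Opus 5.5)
Your proposal is correct and is essentially the paper's argument: both charge the nodes and edges of each gadget to literals and operator occurrences to get an $O(|F|)$-size graph, with the paper bounding a disjunction's contribution by six nodes and eight edges per operator. The only slip is that the constant $4$ in your inductive bound does not close for a binary disjunction ($4\ell = 8$ new edges against a single operator), so use constants $6$ and $8$ as the paper does; your remark that the $\{0,1\}^n$ cost vectors take $O(|F|^2)$ space is a detail the paper leaves implicit.
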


	\begin{proof}
		We have $\mathsf L + \mathsf O\leq |F|$, where $\mathsf L$ ($\mathsf O$, resp.) is the number of literals ($\land$ and $\lor$ operators, resp.) in $F$.
		Each literal introduces two nodes and one edge.
		Given an operator, we can bound the number of nodes and edges added to the graph as follows:
		For each conjunction, we add one edge per operator.
		For each disjunction, we have $\ell$ subformulas with $\ell-1$ operators.
		Therefore, per operator, we add at most six nodes ($s_\phi$, $t_\phi$, $s_\phi^i$, and $t_\phi^i$ with $1 \leq i \leq \ell$) and eight edges ($\{s_{\phi}, t_{\phi}^i\}$, $\{s_{\phi}^i, t_{\phi}\}$, $\{ s_{\phi_{i}}, t_{\phi}^i\}$, and $\{s_{\phi}^i, t_{\phi_{i}}\}$ with $1 \leq i \leq \ell$).
		In total, the constructed graph contains $O(\mathsf{L} + \mathsf{O})\subseteq O(|F|)$ nodes and edges implying polynomial time for the construction.
	\end{proof}

	Before we prove that our construction is the desired reduction to show $\W[\SAT]$-hardness, we list some properties of $G_F$.
	We start with the fact that $G_F$ is a two-terminal series-parallel graph.
	Following the definition of Eppstein~\cite{eppstein1992parallel}, a two-terminal series-parallel graph is constructed inductively:
	\begin{description}
		\item[Base] The graph with two vertices $s$ and $t$ and a single edge $\{s,t\}$ is a two-terminal series-parallel graph. This graph is also called $P_2$.
		\item[Parallel Composition] A pair of two-terminal series-parallel graphs $(X, Y)$ with terminal pairs $(s_X, t_X)$ and $(s_Y, t_Y)$, respectively, forms a new two-terminal series-parallel graph ${G=P(X,Y)}$ through identifying $s=s_X=s_Y$ and $t=t_X=t_Y$.
		\item[Series Composition] A pair of two-terminal series-parallel graphs $(X, Y)$ with terminal pairs $(s_X, t_X)$ and $(s_Y, t_Y)$, respectively, forms a new two-terminal series-parallel graph ${G=S(X,Y)}$ through identifying $s=s_X$, $t_X=s_Y$, and $t=t_Y$.
	\end{description}

	\begin{lemma}\label{lem:spgraph}
		Given a \textup{WMFS} instance $F$, $G_F$ is a two-terminal series-parallel graph.
	\end{lemma}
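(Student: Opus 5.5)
We prove the slightly stronger claim that for every subformula $\phi$ of $F$, the subgraph $G_\phi$ built for $\phi$ is a two-terminal series-parallel graph with terminals $(s_\phi, t_\phi)$. The argument is structural induction on $\phi$, following the three cases of the construction. The lemma is then the case $\phi = F$. If $F$ was split into conjuncts treated as separate components, we apply the claim to each component. Alternatively, we note that the top-level construction is itself a conjunction and therefore a series composition.

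\textbf{Literal.} If $\phi = x_i$, then $G_\phi$ is the single edge $\{s_\phi, t_\phi\}$. This is exactly $P_2$ with terminals $(s_\phi,t_\phi)$.

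\textbf{Conjunction.} Let $\phi = \phi_1\land\cdots\land\phi_\ell$. Each connecting edge $\{t_{\phi_i}, s_{\phi_{i+1}}\}$ is a copy of $P_2$. Therefore $G_\phi$ is the iterated series composition
\[
S\bigl(G_{\phi_1}, S\bigl(P_2, S\bigl(G_{\phi_2}, \dots S(P_2, G_{\phi_\ell})\dots\bigr)\bigr)\bigr).
\]
In this composition, $t_{\phi_i}$ is identified with the first terminal of the connecting edge, and its second terminal with $s_{\phi_{i+1}}$. The resulting terminals are $s_{\phi_1}=s_\phi$ and $t_{\phi_\ell}=t_\phi$, as the construction requires. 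By the induction hypothesis each $G_{\phi_i}$ is series-parallel with terminals $(s_{\phi_i},t_{\phi_i})$, so $G_\phi$ is series-parallel as well.

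\textbf{Disjunction.} Let $\phi = \phi_1\lor\cdots\lor\phi_\ell$. For each $i$, consider the path through $s_\phi, t_\phi^i$, then $G_{\phi_i}$, then $s_\phi^i, t_\phi$. This is the series composition
\[
S\bigl(P_2, S\bigl(G_{\phi_i}, P_2\bigr)\bigr),
\]
where the first $P_2$ is the edge $\{s_\phi,t_\phi^i\}$ and the last $P_2$ is the edge $\{s_\phi^i,t_\phi\}$. Its terminals are $(s_\phi,t_\phi)$. Call this graph $H_i$. Then $G_\phi = P(H_1, P(H_2,\dots,H_\ell))$, the parallel composition identifying all copies of $s_\phi$ and all copies of $t_\phi$.

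\textbf{Main obstacle.} The construction joins subgraphs by adding edges, whereas Eppstein's definition joins them by identifying terminals. We therefore have to check that the node identifications of the construction match the series and parallel compositions above. Specifically, we must verify two things:
\begin{itemize}
\item The subgraphs $G_{\phi_i}$ are vertex-disjoint apart from the explicitly identified terminals. This holds because the construction creates fresh nodes for every literal occurrence.
\item The only identifications the construction makes, namely $s_\phi=s_{\phi_1}$ and $t_\phi=t_{\phi_\ell}$ in the conjunction case, are exactly the terminal identifications performed by the series compositions.
\end{itemize}
Once this bookkeeping is confirmed, each case is an immediate application of the definitions.
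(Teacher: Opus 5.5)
Your overall plan matches the paper's: structural induction on $\phi$, $P_2$ series compositions for the connecting edges of a conjunction, and series-then-parallel composition for a disjunction. Your literal and conjunction cases are correct.

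The disjunction case is wrong as written. In the construction, the branch for $\phi_i$ is the path $(s_\phi, t_\phi^i, s_{\phi_i})$, then $G_{\phi_i}$, then the path $(t_{\phi_i}, s_\phi^i, t_\phi)$. So there are \emph{two} edges on each side of $G_{\phi_i}$, because the construction also adds $\{s_{\phi_i}, t_\phi^i\}$ and $\{s_\phi^i, t_{\phi_i}\}$. Your composition $S(P_2, S(G_{\phi_i}, P_2))$ uses only $\{s_\phi, t_\phi^i\}$ and $\{s_\phi^i, t_\phi\}$. The series compositions therefore identify $t_\phi^i$ with $s_{\phi_i}$ and $s_\phi^i$ with $t_{\phi_i}$. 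Your $H_i$ then has two fewer nodes and two fewer edges than the actual branch. It also makes $s_\phi$ directly adjacent to $s_{\phi_i}$, an adjacency that does not exist in $G_\phi$. Hence $P(H_1,\dots,H_\ell)$ is not $G_\phi$.

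This is exactly the bookkeeping your ``main obstacle'' paragraph says holds. There you assert that the only identifications are $s_\phi = s_{\phi_1}$ and $t_\phi = t_{\phi_\ell}$. Your own disjunction decomposition, however, performs two extra identifications per branch that the construction never makes.

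The repair is immediate. Replace each outer $P_2$ by the three-node path $P_3 = S(P_2,P_2)$ and set $H_i = S(S(P_3, G_{\phi_i}), P_3)$. Here the first $P_3$ is $(s_\phi, t_\phi^i, s_{\phi_i})$ and the second is $(t_{\phi_i}, s_\phi^i, t_\phi)$. Then take the parallel composition of $H_1,\dots,H_\ell$. This is precisely the paper's decomposition, and with it the rest of your argument goes through.
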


	\begin{proof}
		To show this, we prove that the construction above can be implemented via a sequence of series and parallel compositions.
		For the literal construction, this is exactly the base case for two-terminal series-parallel graphs.

		Let $\phi$ be a conjunction $\phi_1 \land \cdots \land \phi_\ell$.
		Then we observe that a series composition $S_i = S(G_{\phi_i}, P_2)$ for each $1\leq i < \ell$ followed by a series composition of all resulting graphs and $G_{\phi_\ell}$, which yields $G_\phi$.
		Now, let $\phi$ be a disjunction $\phi_1 \lor \cdots \lor \phi_\ell$.
		We first do a series composition of a three-node path ($P_3$) with each $G_{\phi_i}$ and another $P_3$, i.e., we compute $S_i = S(S(P_3, G_{\phi_i}), P_3)$ for each $1\leq i\leq \ell$.
		Then, we do a parallel composition of all~$S_i$, resulting in~$G_\phi$.
	\end{proof}

	\begin{lemma}\label{lem:planar}
		Given a \textup{WMFS} instance $F$, $G_F$ is bipartite and planar.
	\end{lemma}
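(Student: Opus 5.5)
Planarity follows from \cref{lem:spgraph}: every two-terminal series-parallel graph is planar. More precisely, I would prove by induction over the series-parallel decomposition that such a graph has a planar embedding with both terminals $s$ and $t$ on the outer face. The base graph $P_2$ trivially has this property. For a series composition $S(X,Y)$, I place the embeddings of $X$ and $Y$ side by side and identify $t_X$ with $s_Y$, both of which lie on the outer faces; then $s=s_X$ and $t=t_Y$ remain on the outer face. For a parallel composition $P(X,Y)$, I draw $Y$ in the outer face of $X$ so that its terminals are next to those of $X$ and identify them; by the induction hypothesis this can be done without crossings, and the new terminals are still on the outer face. Alternatively, I could simply cite that series-parallel graphs exclude $K_4$ as a minor, and are therefore planar.

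For bipartiteness, I would prove a stronger invariant by induction over the recursive construction of $G_\phi$: the graph $G_\phi$ admits a 2-coloring in which $s_\phi$ is in class $S$ and $t_\phi$ is in class $T$, that is, every path from $s_\phi$ to $t_\phi$ has odd length. In the literal case, the single edge $\{s_\phi,t_\phi\}$ gives this directly.

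For a conjunction, each $G_{\phi_i}$ has $s_{\phi_i}\in S$ and $t_{\phi_i}\in T$ by induction. The added edges $\{t_{\phi_i},s_{\phi_{i+1}}\}$ join $T$ to $S$. The subgraphs are otherwise disjoint, so the colorings combine consistently, with $s_\phi=s_{\phi_1}\in S$ and $t_\phi=t_{\phi_\ell}\in T$.

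For a disjunction, I put $s_\phi$ and all $s_\phi^i$ into $S$, and $t_\phi$ and all $t_\phi^i$ into $T$. The edges $\{s_\phi,t_\phi^i\}$, $\{s_\phi^i,t_\phi\}$, $\{s_{\phi_i},t_\phi^i\}$ and $\{s_\phi^i,t_{\phi_i}\}$ all join $S$ to $T$, using the inductive colors of $s_{\phi_i}$ and $t_{\phi_i}$. The subgraphs $G_{\phi_i}$ share no vertices with each other or with the new nodes, so the coloring is consistent.

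Finally, $G_F$ is the case $\phi=F$, together with the disjoint union over top-level conjuncts mentioned earlier. A disjoint union preserves both bipartiteness and planarity.

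The only delicate point is the vertex identification in the conjunction case and in the parallel compositions. Identifying vertices can create odd cycles or crossings in general, so the invariants ($s\in S$, $t\in T$, and terminals on the outer face) must be stated strongly enough for the induction to go through. With those invariants in place, every step is routine.
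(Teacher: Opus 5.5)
Your proposal is correct and matches the paper's proof: bipartiteness by induction over the formula with the invariant $s_\phi\in S$, $t_\phi\in T$ (literal, conjunction and disjunction cases as you describe), and planarity via \cref{lem:spgraph} together with the planarity of two-terminal series-parallel graphs. The only difference is that the paper simply cites the latter fact (Duffin) instead of sketching the outer-face induction or the $K_4$-minor argument as you do.
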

	\begin{proof}
		We prove the bipartite property via induction over formula depth, while keeping the invariant $s_\phi\in S$ and $t_\phi\in T$ for all subformulas $\phi$.
		For the base case, the literal construction, we set $s_\phi\in S$ and $t_\phi \in T$, yielding a bipartite partition.
		If $\phi$ is a conjunction, we only add edges between $S$-nodes and $T$-nodes by construction. Thus, the whole construction is still bipartite.
		For the remaining case, consider $\phi$ to be a disjunction.
		By adding $s_\phi$ to $S$, $t_\phi$ to $T$, and also $t_\phi^i$ to $T$ and $s_\phi^i$ to $S$ for $1\leq i\leq \ell$, we obtain a valid bipartition again.

		Two-terminal series-parallel graphs are planar~\cite{duffin1965topology}, yielding planarity via~\cref{lem:spgraph}.
	\end{proof}

\begin{theorem}\label{th:wsatZ}
	\MSBM{} is $\W[\SAT]$-hard when parameterized by the matching cost $Z$, even if the graph is planar.
\end{theorem}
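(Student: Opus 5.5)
We reduce from WMFS parameterized by $q$ via the construction $G_F$, with $k=n$ (one cost entry per variable) and $Z=q$. By \cref{lem:reductionruntime} the reduction runs in polynomial time, the parameter is preserved exactly, and by \cref{lem:planar} $G_F$ is planar and bipartite. Every edge cost vector is either $0$ or a unit vector $e_i$. Hence the cost of a perfect matching $M$ equals the number of distinct variables $x_i$ whose literal edges appear in $M$. It therefore suffices to prove the following invariant: $F$ has a satisfying assignment with true-set $X$ if and only if $G_F$ has a perfect matching that uses literal edges only for variables in $X$.

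The core is an inductive lemma over subformulas $\phi$, concerning the subgraph $G_\phi$ with terminals $s_\phi,t_\phi$. I would prove two statements for every assignment $X$.
\begin{enumerate}
\item[(a)] $G_\phi$ always has a perfect matching of $G_\phi-\{s_\phi,t_\phi\}$ that uses no literal edge. Call this the ``skip'' state.
\item[(b)] $\phi$ is true under $X$ if and only if $G_\phi$ has a perfect matching that uses literal edges only of variables in $X$. Call this the ``take'' state.
\end{enumerate}
I would also check that these are the only two possibilities. A matching restricted to $G_\phi$ inside a perfect matching of $G_F$ either covers both terminals internally or covers neither; mixed states are impossible by parity, since $s_\phi\in S$, $t_\phi\in T$, and the interior is balanced.

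The three cases go as follows.
\begin{itemize}
\item \textbf{Literal.} Take uses edge $e_i$, which requires $x_i\in X$. Skip is the empty matching.
\item \textbf{Conjunction.} Take means every $G_{\phi_j}$ is in take and the connector edges are unused. Skip means every $G_{\phi_j}$ is in skip and all connector edges $\{t_{\phi_j},s_{\phi_{j+1}}\}$ are used, which covers all terminals except $s_\phi$ and $t_\phi$. Conversely, a partial mix forces a contradiction. If $G_{\phi_j}$ skips, then $t_{\phi_j}$ must be covered by its connector, which forces $G_{\phi_{j+1}}$ to skip as well. Propagating this along the chain, the first and last blocks force a uniform state.
\item \textbf{Disjunction.} Take means $s_\phi$ matches some $t_\phi^i$ and $t_\phi$ matches $s_\phi^i$ for the same index $i$ (by parity of the branch), so branch $i$ skips. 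Every other branch $j$ must cover $t_\phi^j$ and $s_\phi^j$ via $s_{\phi_j}$ and $t_{\phi_j}$, i.e.\ branch $j$ skips, while branch $i$ itself takes. This is where the roles need a careful check. Concretely, in branch $i$ the nodes $t_\phi^i$ and $s_\phi^i$ are used by the outer terminals, so $G_{\phi_i}$ must be matched internally including its terminals, which is the take state. In the other branches $G_{\phi_j}$ has its terminals matched outward, which is the skip state. So take holds if and only if some $\phi_i$ takes, i.e.\ $\phi$ is true. Skip for $\phi$ means all branches use their outward edges and all $G_{\phi_j}$ skip, which always exists.
\end{itemize}

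The main obstacle is making the ``only these two states'' claim rigorous, especially in the disjunction case, where one must rule out matchings in which $s_\phi$ and $t_\phi$ are served by different branches. I would handle this by counting $S$ and $T$ nodes inside each branch $S_i$ (the branch is balanced apart from the two outer terminals). This count shows that if $s_\phi$ enters branch $i$, then $t_\phi$ must also exit through branch $i$.

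Finally, I would apply (b) to the root. The case of a top-level conjunction split into components is handled by the same argument applied componentwise. This gives: $F$ has a satisfying assignment with at most $q$ true variables if and only if $G_F$ has a perfect matching of cost at most $q$. For the backward direction, set $X$ to be the variables whose literal edges are used; then $|X|\le Z=q$, and by monotonicity of $F$ the assignment $X$ satisfies it.
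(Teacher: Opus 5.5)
Your proposal is correct and is essentially the paper's proof. It uses the same reduction ($k=n$, $Z=q$) and the same two-state dichotomy (your take/skip, the paper's active/bypassed), which follows from $G_\phi$ being balanced with only its terminals adjacent to the outside, together with the same structural induction; the only difference is that the paper splits this into a recursive forward construction and a backward induction showing ``active implies true'', whereas you combine both into the biconditional (b) plus the always-available skip state (a). One small wording slip in the disjunction case: it is $G_{\phi_i}$ that takes there, as you then correctly conclude, not ``branch $i$ skips''.
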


\begin{proof}
	By \cref{lem:reductionruntime,lem:planar}, we can transform a given WMFS formula~$F$ in polynomial time into a bipartite, planar graph~$G_F$.
	It remains to show that~$F$ is satisfiable by setting at most~$q$ variables to true if and only if~$G_F$ has a perfect matching of cost at most~$q$.

	If $F$ is satisfiable with at most $q$ true variables, we recursively construct a perfect matching $M$ in $G_F$ of cost at most $q$ as follows:
	Let $G_\phi$ be the subgraph of $G_F$ that corresponds to the graph obtained by the construction above for formula $\phi$ (initially, $\phi = F$).
	First, assume that $\phi$ is satisfied by the assignment (this is the case for $\phi = F$).
	Consider the cases for $\phi$.
	\begin{description}
		\item[Literal] If $\phi$ is a single literal, then choose the single edge for the perfect matching. Thus, all nodes in $G_\phi$ are matched.
		\item[Conjunction] If $\phi$ is a conjunction of formulas $\phi_1 \land \cdots \land \phi_\ell$, find a perfect matching in each of $G_{\phi_1}, \dots, G_{\phi_\ell}$.
		\item[Disjunction] If $\phi$ is a disjunction of formulas $\phi_1 \lor \cdots \lor \phi_\ell$, choose one $\phi_i$ that is satisfied by the assignment, and treat all other subformulas as unsatisfied.
		Then, (1) find a perfect matching in $G_{\phi_i}$, (2) add edges $\{s_\phi, t_\phi^i\}$ and $\{t_\phi, s_\phi^i\}$ to $M$, and (3) add edges $\{t_{\phi}^j, s_{\phi_{j}}\}$ and $\{s_{\phi}^j, t_{\phi_{j}}\}$ for all $1\leq j\leq \ell, j\neq i$ to $M$.
	\end{description}

	Now, assume that $\phi$ is not satisfied by the assignment or is treated as unsatisfied.
	Note that $\phi \neq F$ for this case.
	By the construction above, we observe that if $\phi$ is treated as unsatisfied, then $s_\phi$ and $t_\phi$ are already matched to nodes outside of $G_\phi$.
	Again, consider the cases for~$\phi$.
	\begin{description}
		\item[Literal] If $\phi$ is a single literal, we have nothing to do. Both nodes are already matched to the outside. Thus, all nodes in $G_\phi$ are matched.
		\item[Conjunction] If $\phi$ is a conjunction of formulas $\phi_1 \land \cdots \land \phi_\ell$, then in graphs $G_{\phi_1}$ and $G_{\phi_\ell}$, there is an odd number of nodes that have to be matched.
		Therefore, we must add edges $\{t_{\phi_i}, s_{\phi_{i+1}}\}$ for all $1\leq i< \ell$ to $M$ to obtain a perfect matching.
		\item[Disjunction] If $\phi$ is a disjunction of formulas $\phi_1 \lor \cdots \lor \phi_\ell$, then nodes $t_{\phi}^i$ can only be matched to nodes $s_{\phi_i}$, and nodes $s_{\phi}^i$ can only be matched to nodes $t_{\phi_i}$.
		Thus, we treat all subformulas as unsatisfied.
	\end{description}

	With this construction, we observe that once a formula $\phi$ is treated as unsatisfied, all edges from $M$ inside $G_\phi$ have cost $0$ and do not contribute to the overall cost.
	If an edge in~$M$ contributes to cost entry $i$, then $x_i$ is set to true.
	If $x_i$ is set to false, the construction above would never treat $\phi = x_i$ as satisfied.
	Thus, the cost of the constructed perfect matching is at most $q$.

	For the other direction, assume that the graph~$G_F$ has a perfect matching of cost at most~$q$.
	As we have only binary cost entries, we have at most $q$ different entries $i_1, \dots, i_q$ contributing to the cost.
	We prove that setting the variables $x_{i_1}, \dots, x_{i_q}$ to true yields an assignment satisfying~$F$.

	Consider a subformula~$\phi$.
	Observe that~$G_\phi$ is balanced, i.e., it has the same number of nodes in~$S$ and~$T$, and only its terminals $s_\phi \in S$ and $t_\phi \in T$ have neighbors outside~$G_\phi$.
	Consequently, in any perfect matching~$M$, exactly one of two states occurs:
	(i)~neither terminal is matched outside~$G_\phi$ ($G_\phi$ is \emph{active}), or (ii)~both terminals are matched outside~$G_\phi$ ($G_\phi$ is \emph{bypassed}).
	We show inductively that if~$G_\phi$ is active, then~$\phi$ is true under the assignment above.
	As before, consider the cases for~$\phi$.
	\begin{description}
		\item[Literal]
		If $\phi = x_i$ is a literal and~$G_\phi$ is active, its only edge $\{s_\phi, t_\phi\}$ must belong to~$M$.
		This edge has cost~$1$ in the $i$-th entry, so $x_i$ is true.
		\item[Conjunction]
		If $\phi$ is a conjunction of formulas $\phi_1 \land \cdots \land \phi_\ell$ and~$G_\phi$ is active, none of the connecting edges $\{t_{\phi_i}, s_{\phi_{i + 1}}\}$ can belong to~$M$.
		Therefore, all~$G_{\phi_i}$ are active and, by induction, all~$\phi_i$ are true.
		\item[Disjunction]
		If $\phi$ is a disjunction of formulas $\phi_1 \lor \cdots \lor \phi_\ell$ and~$G_\phi$ is active, $s_\phi$ is matched to exactly one node~$t_{\phi}^a$ and $t_\phi$ is matched to exactly one node~$s_{\phi}^b$.
		For each~$i$, $s_{\phi_i}$ is matched outside of~$G_{\phi_i}$ if and only if~$t_{\phi_i}$ is.
		Thus, $a = b$ and~$G_{\phi_a}$ is active while all other~$G_{\phi_i}$ are bypassed.
		By induction, $\phi_a$ is true and therefore, $\phi$ is true.
	\end{description}

	Finally, for $\phi = F$, neither~$s_F$ nor~$t_F$ can be matched to nodes outside~$G_F$.
	Thus, $G_F$ is active and by the inductive arguments above, $F$ is satisfied.
	Additionally, since the matching has cost at most~$q$, at most~$q$ variables are set to true.
\end{proof}

The hardness above relies on low cost values with many zero entries.
If all cost vectors consist of positive integers, then $Z \geq k$ and by \cref{lem:fpt} in \cref{sec:approx}, the problem is in~$\FPT$ parameterized by~$Z$ alone.

Because two-terminal series-parallel graphs have treewidth at most 2~\cite{brandstadt1999graph}, we obtain the following corollary by~\cref{lem:spgraph}.
\begin{corollary}\label{cor:tw}
	Unless $\P=\NP$, $\MSBM\notin\XP$ when parameterized by the treewidth $\tw$.
\end{corollary}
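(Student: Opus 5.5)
The plan is to show that \MSBM{} restricted to instances of treewidth at most~$2$ is already \NP-hard. Then an $\XP$-algorithm for parameter $\tw$ would run in time $f(2)\cdot n^{g(2)}$ on these instances, which is polynomial, and so would give $\P=\NP$. The hard instances are the graphs $G_F$ from the proof of \cref{th:wsatZ}. By \cref{lem:spgraph} every such $G_F$ is a two-terminal series-parallel graph, and two-terminal series-parallel graphs have treewidth at most $2$~\cite{brandstadt1999graph}. So every instance produced by the reduction has $\tw\le 2$, a constant independent of the input.

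Next we need \NP-hardness of the source problem in its unparameterized form. Weighted monotone formula satisfiability, with $q$ given as part of the input, is \NP-hard. For instance, Vertex Cover reduces to it: take the monotone CNF $\bigwedge_{\{u,v\}\in E}(x_u\lor x_v)$ with budget $q$. The reduction in \cref{th:wsatZ} runs in polynomial time by \cref{lem:reductionruntime}, and it preserves yes/no answers for every $q$, not only for bounded $q$. The parameterization plays no role in its correctness proof, which only uses the structure of the formula and the budget $q$. Composing the two reductions gives a polynomial-time reduction from an \NP-hard problem to \MSBM{} on graphs of treewidth at most $2$.

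One detail needs checking: the preprocessing step that splits a top-level conjunction into disjoint components. A disjoint union of graphs of treewidth at most $2$ still has treewidth at most $2$. Alternatively, one can avoid the splitting by wrapping $F$ in a trivial one-term disjunction, which keeps $G_F$ series-parallel. Either way the bound $\tw\le 2$ holds. The remaining argument is then immediate: an algorithm with running time $f(\tw)\cdot |x|^{g(\tw)}$ decides these instances in time $O(|x|^{g(2)})$, which is polynomial, so it would decide an \NP-hard problem in polynomial time. The only step that needs any care is confirming that the earlier reduction is valid for unbounded $q$ and that the treewidth bound survives the preprocessing; neither is a real obstacle.
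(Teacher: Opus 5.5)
Your proposal is correct and follows the paper's argument. The graphs $G_F$ from the reduction of \cref{th:wsatZ} are two-terminal series-parallel by \cref{lem:spgraph}, so they have treewidth at most $2$, and an $\XP$-algorithm for parameter $\tw$ would therefore decide an \NP-hard problem in polynomial time. You also spell out three points the paper leaves implicit: that the source problem is \NP-hard when $q$ is part of the input, that the reduction is correct for unbounded $q$, and that splitting a top-level conjunction into disjoint components keeps $\tw\le 2$. All three check out.
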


This is interesting as many problems become fixed parameter tractable, i.e., are in $\FPT$, when parameterized by treewidth.
Another common parameter for graph-based problems is the maximum degree of a vertex.
We can adapt the construction above in such a way that each node has degree at most three.
To achieve this, we only need to adapt the disjunction gadget by interpreting $\phi=\phi_1\lor\dots\lor\phi_\ell$ recursively as $\phi=\phi'_1\lor\phi'_2$ with $\phi'_1=\phi_1\lor\dots\lor\phi_{\lfloor\ell/2\rfloor}$ and $\phi'_2=\phi_{\lfloor\ell/2\rfloor+1}\lor\dots\lor\phi_{\ell}$.
In other words, we reformulate higher-arity disjunctions using only binary disjunctions; see~\cref{fig:gadgetsBinary} for an example with three subformulas.
It remains to show that with binary operators, the maximum degree is at most three.
We prove this by induction over the depth of the formula and keep two invariants for each subformula $\phi$:
\begin{enumerate}
	\item The maximum degree is at most three in $G_\phi$.
	\item The terminals $s_\phi$ and $t_\phi$ have degree at most two in $G_\phi$.
\end{enumerate}
If $\phi$ is a literal, this is trivially true, because $G_\phi$ consists of a single edge.
If $\phi$ is a conjunction of two formulas $\phi_1$, $\phi_2$, only $t_{\phi_1}$ and $s_{\phi_2}$ increase their degree to at most three,
whereas $s_\phi = s_{\phi_1}$ and $t_\phi = t_{\phi_2}$ keep their degree, which is at most two by induction.
If $\phi$ is a disjunction of formulas $\phi_1$ and $\phi_2$, all new nodes introduced have a degree of two.
Because $s_{\phi_1}$, $s_{\phi_2}$, $t_{\phi_1}$, and $t_{\phi_2}$ get one additional edge, their degree increases to at most three.
Therefore, by induction, every degree is at most three, and we get the following theorem.

\begin{theorem}\label{th:maxDegree}
	Unless $\P=\NP$, $\MSBM\notin \XP$ when parameterized by the maximum degree $\Delta$.
\end{theorem}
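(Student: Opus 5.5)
The plan is to show that if \MSBM{} could be solved in time $O(f(\Delta)\cdot n^{g(\Delta)})$, then an \NP-hard problem could be solved in polynomial time. The route goes through the degree-bounded variant of the construction from the proof of \cref{th:wsatZ}. As the source problem I would use WMFS without the parameter, i.e., with $q$ given as part of the input. This problem is \NP-hard: monotone CNF with a bound on the number of true variables is exactly \textsc{Vertex Cover}/\textsc{Hitting Set}, and a CNF formula is a formula of the alternating form assumed in the construction (a conjunction of disjunctions).

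The key steps, in order:

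\begin{enumerate}
\item \textbf{Binarize the formula.} Given $F$ and $q$, rewrite every $\ell$-ary disjunction as a balanced tree of binary disjunctions, as described before the statement. Do the same for conjunctions; the conjunction gadget is a chain, so this changes nothing essential.

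This rewriting preserves semantics and increases the size only polynomially. The alternation structure is broken, since a disjunction now has disjunctions as children. This is harmless: the correctness proof of \cref{th:wsatZ} is a plain structural induction over literal, conjunction and disjunction nodes and never uses alternation. The same holds for \cref{lem:reductionruntime,lem:planar}. Nested binary disjunctions still yield balanced gadgets whose only external contacts are the terminals, so the active/bypassed dichotomy still goes through.

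\item \textbf{Bound the degree.} The induction given just before the statement yields $\Delta(G_F)\le 3$. The two invariants are that all degrees in $G_\phi$ are at most three and that the terminals of $G_\phi$ have degree at most two.

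\item \textbf{Preserve the answer.} By the equivalence proved in \cref{th:wsatZ}, $F$ has a satisfying assignment with at most $q$ true variables if and only if $G_F$ has a perfect matching of cost at most $Z=q$.
\end{enumerate}

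Now suppose \MSBM{} were in \XP{} with parameter $\Delta$. Every instance produced by the reduction has $\Delta\le 3$, so the assumed algorithm runs in time $f(3)\cdot |G_F|^{g(3)}$ on it, which is polynomial. Composing it with the polynomial reduction decides the \NP-hard problem in polynomial time, so $\P=\NP$.

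The main obstacle I expect is being careful that nothing in the correctness argument depended on the strict alternation of operators or on the disjunction gadget having arity $\ell\ge 3$. I would handle this by re-checking the two inductive directions of \cref{th:wsatZ} for binary gadgets with arbitrary child types. A secondary point is citing \NP-hardness of the unparameterized source problem; I would take it from \textsc{Vertex Cover} as noted above.
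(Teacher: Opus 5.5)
Your proposal is correct and follows the paper's argument. You binarize the disjunction gadget, use the two-invariant induction to get $\Delta\le 3$, and observe that an $\XP$ algorithm would then run in polynomial time on the polynomially constructed instances from the reduction of \cref{th:wsatZ}, thereby deciding an \NP-hard problem. You also make explicit two points the paper leaves implicit: the \NP-hardness of WMFS with $q$ given in the input (via \textsc{Vertex Cover}), and the fact that the correctness induction never uses strict alternation. One small fix: bound the runtime by $\max_{\Delta\le 3} f(\Delta)\,|G_F|^{g(\Delta)}$ rather than plugging in $\Delta=3$, since $f$ and $g$ need not be monotone.
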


\begin{figure}
	\centering
	\includegraphics[page=3]{./figures/SATtoMBSM-gadgets}
	\caption{Gadget for a disjunction $\phi=\phi_1\lor\phi_2\lor\phi_3$, expressed using only binary operators.}
	\label{fig:gadgetsBinary}
\end{figure}

% !TeXroot = main.tex
\section{\texorpdfstring{\boldmath $\W[\SAT]$}{W[SAT]}-hardness parameterized by the number of dimensions \texorpdfstring{\boldmath $k$}{k}}
Instead of using $Z$ as a parameter, we now parameterize by the number~$k$ of cost vector entries and prove $\W[\SAT]$-hardness.
The main idea is to encode the choice of picking any variable as the $r$-th true variable into two fixed cost vector entries.

\begin{theorem}\label{th:wsatK}
	\MSBM{} is $\W[\SAT]$-hard when parameterized by the length $k$ of the cost vectors, even if the graph is planar.
\end{theorem}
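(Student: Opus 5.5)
The plan is an \FPT-reduction from \WMFS{} (parameter $q$) to \MSBM{} with $k = 2q$. The $2q$ cost entries are grouped into $q$ \emph{slots}: slot $r\in[q]$ owns entries $2r-1$ and $2r$. The idea is that slot $r$ encodes the index of the $r$-th true variable, so that picking any variable as the $r$-th true one uses only these two fixed entries.

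\textbf{Construction.} I reuse the series-parallel construction from the proof of \cref{th:wsatZ} unchanged, except for the literal gadget. The literal $x_i$ becomes the disjunction gadget applied to $q$ single-edge subgraphs. The $r$-th of these edges has cost $i$ in entry $2r-1$, cost $n+1-i$ in entry $2r$, and $0$ in all other entries. In matching terms, choosing this edge means placing $x_i$ in slot $r$. In addition, for each $r\in[q]$ I add a separate component: a disjunction gadget over $n$ single edges, whose $i$-th edge carries the same slot-$r$ cost $(i,n+1-i)$. These components are combined with the main graph exactly as the paper already does for top-level conjuncts, so each of them is forced to be active. I set $Z = q(n+1)$. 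Planarity and bipartiteness should follow as in \cref{lem:spgraph,lem:planar}, because every new piece is again built by series and parallel compositions. The construction is polynomial, and $k=2q$ is bounded by a function of the parameter.

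\textbf{Key cost inequality.} The central computation concerns a single slot. Let $V_r$ be the set of variable indices whose slot-$r$ edges appear in the matching. If $V_r$ is nonempty, then slot $r$ contributes
\[
\max V_r + (n+1-\min V_r) = n+1 + (\max V_r-\min V_r) \ge n + |V_r|.
\]
This bound is attained with equality $n+1$ exactly when $|V_r|=1$.

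\textbf{Correctness.} Because $F$ is monotone, a satisfying assignment with at most $q$ true variables can be padded to exactly $q$ true variables, assuming $q\le n$; otherwise the instance is trivial. Name these variables $x_{j_1},\dots,x_{j_q}$.
\begin{itemize}
\item Forward direction: I build the matching as in the proof of \cref{th:wsatZ}, with two refinements. Each active literal gadget for $x_{j_r}$ uses its slot-$r$ edge. The forced component for slot $r$ uses edge $j_r$. Then every $V_r=\{j_r\}$, and the cost is exactly $q(n+1)$.
\item Converse direction: the forced components make every $V_r$ nonempty. Summing the key inequality over all slots gives a total cost of at least $qn + \sum_r |V_r| \ge qn + |\bigcup_r V_r|$. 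So a cost of at most $q(n+1)$ forces $|\bigcup_r V_r|\le q$. I then set exactly the variables in $\bigcup_r V_r$ to true. The active/bypassed induction from \cref{th:wsatZ} carries over verbatim and shows that $F$ is satisfied.
\end{itemize}

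\textbf{Main obstacle.} The delicate point is that an unused slot contributes $0$ to the cost. Without the forced per-slot components, a matching could leave some slots empty and pack several variables into the remaining slots, staying within budget while making more than $q$ variables true. The forced components are exactly what closes this gap, so the step I would check most carefully is that these components can be attached while keeping the graph planar and keeping them forced active.
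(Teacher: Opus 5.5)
Your proposal is correct and follows essentially the paper's proof: the same slot encoding with $k=2q$, the same replacement of each literal by $x_{i,1}\lor\dots\lor x_{i,q}$, and the same disjoint, forced-active components $G_{F_r}$ that make every slot non-empty. The differences are cosmetic. You put both slot entries $(i,n+1-i)$ on a single edge with $Z=q(n+1)$, whereas the paper uses a three-edge path carrying $i$ and $n-i$ on separate edges with $Z=qn$. Your counting bound $qn+\sum_r|V_r|$ and explicit padding to exactly $q$ true variables just spell out the paper's converse argument more carefully.
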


\begin{proof}
	We use a reduction similar to the one in the proof of \cref{th:wsatZ} with parameter~$Z$.
	Consider a WMFS formula $F$ over $n$ variables with parameter $q$, the number of variables set to true.
	We reduce this to an instance of $\MSBM$ with graph $G_F$, cost function~$c$, maximum cost $Z=qn$, and parameter $k=2q$, the length of the cost vector.
	It suffices to replace the literal gadget.
	We want to add cost $n$ if a variable is the $r$-th true variable.
	If at least two variables $x_i, x_j$ are considered the $r$-th true variable, then the total cost should increase by more than $n$.
	Thus, the matching in the resulting graph has to make sure that no more than $q$ variables are chosen.
	We use the following changes:
	\begin{itemize}
		\item First, replace every literal $x_i$ with the formula $x_{i,1}\lor x_{i,2}\lor \dots \lor x_{i,q}$.
		This clearly does not change the satisfiability of $F$.
		\item When transforming a literal $\phi = x_{i, r}$ into a graph gadget, we construct a path with three edges as follows.
		Add four nodes $s_\phi$ and $t_\phi$, as well as $t'_\phi$ and $s'_\phi$.
		Insert edges $\{t'_\phi, s'_\phi\}$ with cost 0 in all entries, $\{s_\phi, t'_\phi\}$ with cost $c_{2r-1} = i$, and $\{s'_\phi, t_\phi\}$ with cost $c_{2r} = n-i$. All other edge cost entries are 0. The resulting graph $G_F$ can be seen in \Cref{fig:reductionK}.
		\item Add disjoint graphs $G_{F_r}$ for the formula $F_r = (x_{1,r}\lor x_{2,r}\lor \dots\lor x_{n,r})$ for each $r=1,\dots, q$.
		These graphs ensure that up to $q$ variables are chosen (with possible duplicates).
		This corresponds to changing the formula $F$ to $F\land \bigwedge_{r=1}^q F_r$.
		Again, satisfiability of $F$ is not affected by this transformation.
		Due to the graphs $G_{F_r}$, the matching has to choose at least one variable set to true for each $r$, resulting in a cost of at least $qn$.
		See~\Cref{fig:reductionK-chooseGadget} for an example with $n=4$ variables and $q=2$.
	\end{itemize}

	\begin{figure}
		\centering
		\includegraphics[page=2]{figures/SATtoMBSM-dash}
		\caption{%
			The \MSBM{} instance graph $G_F$ resulting from the transformation of the WMFS formula $F = {((x_1\lor x_2)\land x_3\land x_4)\lor (x_1\land (x_2\lor (x_3\land x_4)))}$ in the proof of \cref{th:wsatK}.
			The nodes and edges of literals are black, those of conjunctions are orange and those of disjunctions are green.
			The non-dashed edges highlight a matching of cost~$8 = (1 + 3) + (2 + 2)$ which corresponds to a WMFS solution with only two true variables ($x_1$ as the first and~$x_2$ as the second true variable).
		}
		\label{fig:reductionK}
	\end{figure}

	\begin{figure}
		\centering
		\includegraphics[page=3]{figures/SATtoMBSM-dash}
		\caption{%
			The \MSBM{} instance graphs $G_{F_1}$ and $G_{F_2}$ resulting from the transformation of a WMFS formula with four variables and $q=2$ in the proof of \cref{th:wsatK}.
			The nodes and edges of literals are black, and those of disjunctions are green.
			The non-dashed edges highlight a matching of cost $8 = (1 + 3) + (2 + 2)$ which corresponds to a WMFS assignment with only two true variables ($x_1$ as the first and~$x_2$ as the second true variable).
		}
		\label{fig:reductionK-chooseGadget}
	\end{figure}

	The remaining construction follows the reduction from \Cref{th:wsatZ}.
	Clearly, the resulting graph is again bipartite and planar.
	To prove correctness of the reduction, it is sufficient to consider the literal gadgets.

	In the graphs $G_{F_r}$, we can set only one literal $x_{i, r}$ to true as the $r$-th true variable.
	Now assume that in $G_F$ we choose a different variable, say $x_{j,r}$, to be the $r$-th true variable.
	Then, the bottleneck sum of cost vector entries $2r-1$ and $2r$ will be at least $\max(i,j) + \max({n-i}, {n-j}) = n + |i - j| > n$.
	This results in a total sum of bottlenecks of at least $qn+1$.
	Now assume that we have a matching of cost $qn$.
	For literal gadgets, we have a path of three edges instead of a single edge.
	Consider a formula $\phi$ consisting of a single literal $x_{i,r}$.
	If $G_\phi$ is active, i.e., $s_\phi$ is not matched to a node outside $G_\phi$, we need to exclude the middle edge of the path from the matching.
	Therefore, we add cost $n$ to the matching through cost vector entries $2r-1$ and $2r$.
	With this base case, we can follow the induction step from \Cref{th:wsatZ} to show that if there is a matching of cost $qn$, then the formula is satisfiable.

	Conversely, assume $F$ is satisfiable with $q$ true variables.
	We show that there is a matching of cost exactly $qn$.
	For simplicity, let $x_1,x_2,\dots, x_q$ be the $q$ variables set to true in a satisfying assignment.
	Then, in $G_{F_r}$, we activate the literal gadget for $x_{r,r}$ with cost $n$.
	This gives a matching of cost exactly $qn$.
	For $G_F$, we can follow the proof from \Cref{th:wsatZ}. When we arrive at a literal $x_{i,r}$, we activate the corresponding graph only if $i=r$.
	Because every literal from $F$ was replaced by $x_{i,1}\lor x_{i,2}\lor \dots \lor x_{i,q}$, we can always find one literal gadget we can activate if this formula is satisfied by the assignment.

	It remains to show that the reduction runs in $\FPT$-time.
	Following the proof of~\cref{th:wsatZ}, we only add a constant number of nodes and edges per operator ($\land$ or $\lor$ in $F$).
	For each literal $x_i$ in $F$, however, we introduce $q$ new literals $x_{i,r}$ with $r\in[q]$.
	Therefore, we add $O(q)$ many nodes and edges to the graph for each literal in $F$.
	Additionally, we observe that each graph $G_{F_r}$ contains $6n + 2$ nodes and $7n$ edges.
	Thus, the total size of $G_F$ is in $O(\mathsf O + q\cdot \mathsf L + qn)\subseteq O(q|F| + qn)$, where $\mathsf L$ ($\mathsf O$, resp.) is again the number of literals ($\land$~and~$\lor$ operators, resp.) in $F$.
	This implies $\FPT$-time.
\end{proof}

% !TeXroot = main.tex
\section{(In-)Approximability}
\label{sec:approx}

In the construction for the $\W[\SAT]$-hardness parameterized by~$Z$ (\cref{th:wsatZ}), the number of dimensions of the cost vector for \MSBM{} corresponds to the number of variables of the WMFS instance, i.e., $k = n$.
Additionally, the objective value is preserved exactly in the reduction.
As the optimization variant of WMFS (\textsc{Minimum Monotone Satisfying Assignment}, MMSA) is \NP-hard to approximate within ratio $|F|^{\nicefrac{1}{(\log \log |F|)^{O(1)}}}$~\cite{DinurHarshaKindler2015,DinurSafra2004,Hirahara2022} and as $n \leq |F|$, we get the following approximation hardness for \minMSBM{}.

\begin{corollary}\label{cor:approx-hard}
	Unless $\P = \NP$, \minMSBM{} does not admit a $k^{\nicefrac{1}{(\log \log k)^{O(1)}}}$-approximation algorithm with polynomial runtime, even for planar graphs.
\end{corollary}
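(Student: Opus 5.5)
We will derive \cref{cor:approx-hard} by reading the reduction in the proof of \cref{th:wsatZ} as an approximation-preserving reduction from MMSA to \minMSBM{}.

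Suppose, for contradiction, that some polynomial-time algorithm $\mathcal A$ approximates \minMSBM{} within $\alpha(k)=k^{1/(\log\log k)^{c}}$ for a constant $c$ to be fixed later. Take an MMSA instance, i.e., a monotone formula $F$ over $n$ variables. We apply the construction $G_F$ with unit-vector literal costs. By \cref{lem:reductionruntime,lem:planar}, this yields in polynomial time a planar bipartite instance with $k=n$ dimensions.

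The proof of \cref{th:wsatZ} gives an exact correspondence of objective values in both directions.
\begin{itemize}
\item Any satisfying assignment with $w$ true variables yields a perfect matching of cost at most $w$. Hence $\OPT(G_F)\le \OPT(F)$.
\item Conversely, take any perfect matching $M$ of cost $C$. Set to true exactly the variables $x_i$ whose coordinate $i$ has bottleneck $1$. The induction over active and bypassed gadgets shows this assignment satisfies $F$, and it has exactly $C$ true variables.
\end{itemize}
Both maps are constructive and run in polynomial time. In particular, $\OPT(G_F)=\OPT(F)$, and a matching of cost at most $\alpha(k)\,\OPT(G_F)$ converts into an assignment of weight at most $\alpha(n)\,\OPT(F)$.

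A degenerate case needs a remark. If $F$ is unsatisfiable, no perfect matching exists. This is detectable in polynomial time by a matching algorithm, and the MMSA hardness concerns satisfiable instances anyway.

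It remains to compare ratios. Since $n\le |F|$ and the function $x\mapsto x^{1/(\log\log x)^{c}}$ is nondecreasing for large $x$, we get $\alpha(n)\le \alpha(|F|)$. So $\mathcal A$ would give a polynomial-time $|F|^{1/(\log\log|F|)^{c}}$-approximation for MMSA. This contradicts the cited hardness for a suitable constant $c$ in the exponent, unless $\P=\NP$.

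The main obstacle is matching quantifiers in the $(\log\log\cdot)^{O(1)}$ exponent. The cited result rules out ratio $|F|^{1/(\log\log|F|)^{c_0}}$ for some fixed $c_0$. We choose $c=c_0$; any larger $c$ gives a smaller ratio, which is then also ruled out.

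One further check is needed: that $k$ is not trivially tiny relative to the problem hardness. Variables not occurring in $F$ can be dropped, so $k=n$ grows with $|F|$ on hard instances. We may also need to pad $k$ up to $n$ with all-zero coordinates, which is harmless.
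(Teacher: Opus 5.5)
Your proposal is correct and is essentially the paper's argument: the reduction from the proof of \cref{th:wsatZ} yields planar instances with $k = n \le |F|$ and preserves the optimum exactly, so the cited MMSA inapproximability transfers directly. Your extra care with the monotonicity of $x \mapsto x^{1/(\log\log x)^{c}}$ (bounded $n$ can be brute-forced) and with the quantifier hidden in the $O(1)$ makes explicit what the paper leaves implicit, and the unsatisfiable case you treat cannot occur for a constant-free monotone formula, since the all-true assignment satisfies it.
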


This factor is super-logarithmic but sublinear.
Unfortunately, we are unable to show tightness for this result by giving an approximation algorithm achieving this factor.
However, we show that we can achieve an approximation factor in $O(k)$ with an arbitrarily small constant factor in the $O$-notation by searching for perfect matchings in suitably chosen subgraphs of $G$.

\begin{definition}
	Let $M$ be a matching in a graph $G$ with cost function $c$, and consider the bottleneck vector $\zeta(M) = (\zeta_1,\dots, \zeta_k)$ with $\zeta_i = \max_{e\in M} c_i(e)$.
	For an integer $1 \leq d \leq k$ and a matching $M$ (resp.\ a bottleneck vector $\zeta$),
	let $T_d(M)$ (resp.\ $T_d(\zeta)$) be the sum of the $d$ largest values in $\zeta(M)$ (resp.\ $\zeta$).
	By $T_d(G)$ we denote the minimum $T_d(M)$ over all perfect matchings $M$ in $G$;
	$T_d(G)$ is called the \emph{$d$-threshold} of $G$.
\end{definition}

We first observe that the total cost of a matching that realizes the $d$-threshold $T_d$ is at most $\nicefrac{k}{d} \cdot T_d$;
this holds because the mean over all $k$ entries of the bottleneck vector is at most the mean over its $d$ largest entries.
Furthermore, we observe that $T_d(G)$ is a lower bound on the minimum cost \OPT{} of any perfect matching in $G$.
Together, these observations imply that a matching realizing the $d$-threshold provides a $\nicefrac{k}{d}$-approximation for \minMSBM{}.

Next, we prove that we can efficiently decide whether there is a perfect matching $M$ with $T_d(M)$ at most a probe value $\tau_d$, allowing us to use binary search to find $T_d(G)$ and a matching realizing it.
\begin{lemma}\label{lem:approx}
	Given a graph $G$, a cost function $c$ whose entries are at most $N$, an integer $1 \leq d \leq k$, and a target value $\tau_d$,
	we can decide in time $O\left(\binom k d n^{2d + 2.5}\right)$
	whether there exists a perfect matching $M$ in $G$ with ${T_d(M) \leq \tau_d}$.
	Computing $T_d(G)$ and a matching realizing it can be done in
	time $O\left(\binom k d n^{2d + 2.5}\log(dN)\right)$.
\end{lemma}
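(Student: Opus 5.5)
We decide the question by enumerating a set of candidate coordinates together with their bottleneck edges, in the same spirit as the proof of \cref{th:wp-membership-k}. The key observation is that if a perfect matching $M$ satisfies $T_d(M) \leq \tau_d$, then there is a set $D \subseteq [k]$ with $|D| = d$ consisting of the indices of the $d$ largest entries of $\zeta(M)$ (ties broken arbitrarily). For each $i \in D$ there is an edge $e_i \in M$ with $c_i(e_i) = \zeta_i$. Moreover, every entry $\zeta_j$ with $j \notin D$ is at most $\mu := \min_{i\in D} \zeta_i$. So $M$ is fully certified by the set $D$ and the tuple $(e_i)_{i\in D}$.

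\textbf{The algorithm.} We enumerate all $\binom{k}{d}$ subsets $D$ and, for each, all $O((n^2)^d) = O(n^{2d})$ tuples $(e_i)_{i \in D}$ of edges. For a fixed choice we proceed as follows.
\begin{enumerate}
\item Set $\zeta_i := c_i(e_i)$ for $i \in D$ and $\mu := \min_{i \in D} \zeta_i$.
\item Reject the guess if $\sum_{i\in D} \zeta_i > \tau_d$.
\item Form the edge set $E' = \{e \in E : c_i(e) \leq \zeta_i \text{ for all } i \in D,\ c_j(e) \leq \mu \text{ for all } j \notin D\}$.
\item Test whether $(V, E')$ has a perfect matching in $O(n^{2.5})$ time with Hopcroft--Karp.
\end{enumerate}
Building $E'$ costs $O(n^2 k)$. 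This is dominated by the matching step, or can be absorbed by preprocessing prefix maxima; I would argue it is absorbed, or else accept the additional factor $k$.

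\textbf{Correctness.} Suppose the algorithm accepts with a perfect matching $M' \subseteq E'$. By construction $\zeta_i(M') \leq \zeta_i$ for $i \in D$ and $\zeta_j(M') \leq \mu$ for $j \notin D$. Take any $d$ coordinates of $\zeta(M')$. Each coordinate in $D$ is bounded by its own $\zeta_i$. Each coordinate outside $D$ is bounded by $\mu$, which is at most every unused $\zeta_i$ with $i \in D$. An exchange argument (matching each chosen coordinate outside $D$ to a distinct unchosen index of $D$) therefore gives $T_d(M') \leq \sum_{i\in D}\zeta_i \leq \tau_d$.

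Conversely, suppose $M$ is a perfect matching with $T_d(M) \leq \tau_d$. The guess $(D, (e_i))$ taken from $M$ as described above satisfies $M \subseteq E'$, and $\sum_{i\in D}\zeta_i = T_d(M)$, so the algorithm accepts. The total running time is $O\bigl(\binom{k}{d} n^{2d} n^{2.5}\bigr)$.

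\textbf{Computing $T_d(G)$.} The value $T_d(G)$ is an integer in $[0, dN]$. We binary search over this range, calling the decision procedure $O(\log(dN))$ times, and at the end return the matching found for the optimal value.

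\textbf{Main obstacle.} The delicate step is the soundness direction: we must justify that coordinates outside $D$, capped only by $\mu$, cannot raise the top-$d$ sum above $\sum_{i\in D}\zeta_i$. This is exactly what the exchange argument above addresses, so I would write it out carefully rather than treat it as obvious.
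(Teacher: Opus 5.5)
Your proposal is correct and follows essentially the same route as the paper. Both enumerate the $\binom{k}{d}$ index sets and $O(n^{2d})$ candidate values for those entries (you guess them via witness edges, which the paper leaves implicit), discard edges exceeding $\zeta_i$ on $D$ or $\min_{i\in D}\zeta_i$ off $D$, test for a perfect matching, and binary search over $[0,dN]$. Your explicit exchange argument for soundness fills in the step the paper dismisses with ``by construction.''
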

\begin{proof}
	We search for a matching $M$ with ${T_d(M) \leq \tau_d}$, i.e., an $M$ for which the largest~$d$ entries of $\zeta(M)$ sum to at most $\tau_d$.
	Our algorithm works as follows:
	\begin{enumerate}
		\item Enumerate all choices of entries $i_1, \dots, i_d$ that define $T_d(M)$,
		\item consider all possible combinations for these $d$ entries in $\zeta$ with $\sum_{j\in [d]}\zeta_{i_j} \leq \tau_d$, and
		\item search for a perfect matching in $G$ after removing any edge $e$ that violates the guessed values of $\zeta(M)$,
		      i.e., removing any edge $e$ for which there is an entry $c_i(e) > \zeta_i$ for an $i \in \{i_1, \dots, i_d\}$ or
			  a $c_i(e) > \min \{\zeta_{i_1}, \dots, \zeta_{i_d}\}$ for an $i \notin \{i_1, \dots, i_d\}$;
			  call this graph $G_\zeta$.
	\end{enumerate}

	If the algorithm finds a perfect matching $M$ in $G_\zeta$, then by construction of $G_\zeta$,
	it has ${T_d(M) \leq \tau_d}$ and the algorithm returns~$M$.
	Conversely, if $G$ has a perfect matching $M$ with ${T_d(M) \leq \tau_d}$, let $i_1, \ldots, i_d$ be the indices corresponding to the largest $d$ entries of $\zeta(M)$,
	breaking ties arbitrarily.
	These indices and the corresponding values of $\zeta(M)$ are considered in steps 1 and 2 of the algorithm.
	Thus, the algorithm will consider a graph $G_\zeta$ that contains this perfect matching $M$ and will find it in step 3.

	For step 1, there are at most $\binom{k}{d}$ many different choices.
	For each such choice, there can be at most $(n^2)^d$ many different combinations of values for the $d$ entries.
	Removing costly edges and finding a perfect matching can be done in $O(n^{2.5})$ time~(see~\cite{1973_Hopcroft_Karp} for bipartite graphs and~\cite{1980_Micali_Vazirani} for general graphs).
	In total, the runtime of this algorithm is $O\left(\binom k d n^{2d + 2.5}\right)$.

	Since $T_d(G)$ lies in $[0, dN]$,
	we can apply binary search over $\tau_d$ and repeatedly use this algorithm to find $T_d(G)$ in time $O\left(\binom k d n^{2d + 2.5}\log(dN)\right)$.
\end{proof}

In summary, we obtain the following theorem.
\begin{theorem}\label{th:k-approx}
	For each constant integer $1 \leq d \leq k$ there is a $\nicefrac k d$-approximation for $\minMSBM$.
	The runtime of this algorithm is $O\left(\binom k d n^{2d + 2.5}\log(dN)\right)$, where $N$ is the largest cost vector entry.
\end{theorem}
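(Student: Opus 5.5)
The plan is to run the algorithm of \cref{lem:approx} to obtain a perfect matching $M^*$ with $T_d(M^*) = T_d(G)$, and then bound its total cost $c(M^*)$ by $\nicefrac{k}{d}\cdot\OPT$. The runtime claim then follows directly from \cref{lem:approx}. Since $d$ is a constant, $\binom{k}{d}\le k^d$ and $n^{2d+2.5}$ are polynomial, and $\log(dN)$ is polynomial in the encoding length of the costs, so the algorithm runs in polynomial time.

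For the approximation guarantee I will prove two inequalities.

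First, $T_d(G)\le \OPT$. Let $M_{\mathrm{opt}}$ be an optimal perfect matching. The sum of the $d$ largest entries of $\zeta(M_{\mathrm{opt}})$ is at most the sum of all its entries, because the entries are nonnegative. Hence $T_d(G)\le T_d(M_{\mathrm{opt}})\le c(M_{\mathrm{opt}})=\OPT$.

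Second, $c(M^*)\le \nicefrac{k}{d}\cdot T_d(M^*)$. Sort the entries of $\zeta(M^*)$ in decreasing order, $z_1\ge\dots\ge z_k$. The average of the top $d$ entries, $\frac1d\sum_{j\le d}z_j$, is at least the average of all $k$ entries. To see this, note that every $z_j$ with $j>d$ is at most $z_d$, which in turn is at most the top-$d$ average. Therefore $\sum_{j\le k}z_j\le \frac{k}{d}\sum_{j\le d}z_j$.

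Chaining the two inequalities gives $c(M^*)\le \frac{k}{d}T_d(G)\le\frac kd\OPT$.

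The only step needing care is the averaging inequality, which I will write out explicitly: $\sum_{j>d}z_j\le (k-d)z_d\le \frac{k-d}{d}\sum_{j\le d}z_j$. I will also note that if $G$ has no perfect matching, the algorithm of \cref{lem:approx} detects this and we report infeasibility.
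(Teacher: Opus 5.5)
Your proposal is correct and follows the paper's argument: compute a matching realizing $T_d(G)$ via \cref{lem:approx}, then chain $T_d(G)\le\OPT$ (nonnegative entries) with the averaging bound $c(M)\le \frac{k}{d}\,T_d(M)$. Your explicit step $\sum_{j>d}z_j\le (k-d)z_d\le \frac{k-d}{d}\sum_{j\le d}z_j$ simply spells out the one-line mean comparison the paper states.
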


By slightly changing the three-step algorithm, we obtain an $\FPT$-algorithm for $\MSBM$ when parameterized by $k$ and $Z$ simultaneously.
We set $\tau_d \coloneq Z$ and $d \coloneq k$, and iterate over all $(Z+1)^k$ bottleneck vectors, which include the bottleneck vector of a matching with cost at most $Z$ if and only if such a matching exists.
The runtime of this algorithm is $O((Z+1)^k n^{2.5})$, implying $\FPT$-time.
Therefore, we obtain the following corollary.

\begin{corollary}\label{lem:fpt}
	$\MSBM$ is in $\FPT$ when parameterized by $k$ and $Z$ simultaneously.
\end{corollary}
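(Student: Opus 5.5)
Since every cost entry is a nonnegative integer, a perfect matching $M$ with $c(M)\le Z$ has a bottleneck vector $\zeta(M)\in\{0,1,\dots,Z\}^k$ with $\sum_{i\in[k]}\zeta_i\le Z$. So I will guess the whole bottleneck vector. This is the three-step algorithm from \cref{lem:approx} with $d\coloneq k$ and $\tau_d\coloneq Z$. Because all $k$ entries are now guessed, steps 1 and 2 reduce to one choice of index set, namely all of $[k]$. The values then range over $\{0,\dots,Z\}^k$, not over edge values, so this takes $(Z+1)^k$ candidates and does not depend on $n$.

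For each candidate $\zeta$ with $\sum_i \zeta_i\le Z$, I build $G_\zeta$ by deleting every edge $e$ that has some $i$ with $c_i(e)>\zeta_i$. I then test $G_\zeta$ for a perfect matching with Hopcroft--Karp in $O(n^{2.5})$ time. I accept if and only if some candidate succeeds.

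Correctness works as in \cref{lem:approx}, in two directions.
\begin{itemize}
\item If a perfect matching $M$ of $G_\zeta$ is found, then $\zeta(M)\le\zeta$ componentwise, so $c(M)\le\sum_i\zeta_i\le Z$.
\item If a perfect matching $M$ of cost at most $Z$ exists, then $\zeta(M)$ is one of the enumerated candidates. Every edge of $M$ survives in $G_{\zeta(M)}$, so the algorithm accepts.
\end{itemize}

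The running time is $O((Z+1)^k n^{2.5})$, which is of the form $f(k+Z)\cdot\mathrm{poly}(n)$, so the problem is in $\FPT$ for the combined parameter.

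The only delicate point is that the enumeration has to run over all integer vectors in $\{0,\dots,Z\}^k$, not over edge-induced values. Integrality and nonnegativity of the costs guarantee that the true bottleneck vector is among these candidates, and this is what makes the count independent of $n$. Beyond that, no step is a real obstacle.
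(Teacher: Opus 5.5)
Your proposal is correct and is essentially the paper's argument: specialize the three-step algorithm of \cref{lem:approx} to $d = k$ and $\tau_d = Z$, enumerate the $(Z+1)^k$ candidate bottleneck vectors in $\{0,\dots,Z\}^k$, and test each threshold graph for a perfect matching, for a total of $O((Z+1)^k n^{2.5})$ time. As a minor remark, building each $G_\zeta$ costs an extra $O(k|E|)$, so the per-candidate time is really $O(k|E| + n^{2.5})$. This is still FPT, and the paper omits this term in the same way.
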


In contrast to our lower approximation bound, the problem admits an efficient parameterized approximation scheme (EPAS) with parameter $k$.
The fundamental idea of this EPAS is to generate a sufficiently fine grid $\Gamma$ of guesses for each entry of the bottleneck vector
and to search for a perfect matching in the subgraphs obtained by applying the $|\Gamma|^k$ combinations of these guesses as thresholds on the edge costs;
the size of $\Gamma$ depends only on $\varepsilon$ and $k$, not on $n$ or the cost values.

\begin{theorem}\label{th:approx-scheme}
	For any sufficiently small $\varepsilon > 0$, there is a $(1+\varepsilon)$-approximation algorithm for $\minMSBM$ that runs in time $O((C\log(k/\varepsilon)/\varepsilon)^k \cdot kn^{2.5})$ for some constant $C$.
\end{theorem}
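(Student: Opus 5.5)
The plan is to bound every entry of an optimal bottleneck vector using a quantity we can compute in polynomial time, and then guess each entry on a geometric grid whose size depends only on $k$ and $\varepsilon$.

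\textbf{Anchor value.} First compute the $1$-threshold $B \coloneq T_1(G)$. This is the classical bottleneck perfect matching problem with respect to the scalar weight $\max_i c_i(e)$. We sort the $O(n^2)$ edges by this weight and binary search for the smallest prefix that contains a perfect matching. This costs $O(kn^2 + n^{2.5}\log n)$, which is a lower-order additive term; alternatively we may invoke \cref{lem:approx} with $d=1$.

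If no perfect matching exists, we report this. By the observations before \cref{lem:approx}, $T_1(G) \le \OPT \le k\,T_1(G)$, so $B \le \OPT \le kB$. If $B = 0$, then $\OPT = 0$ and the matching found is optimal. Otherwise, every entry $\zeta^*_i$ of an optimal bottleneck vector $\zeta^*$ lies in $[0, kB]$.

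\textbf{Grid and rounding.} Define
\[
\Gamma \coloneq \{\varepsilon B/k \cdot (1+\varepsilon)^j : j = 0,1,\dots,J\},
\]
where $J$ is the smallest integer with $\varepsilon B/k\cdot(1+\varepsilon)^J \ge kB$. Then $|\Gamma| = O(\log_{1+\varepsilon}(k^2/\varepsilon)) = O(\log(k/\varepsilon)/\varepsilon)$ for small $\varepsilon$.

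For the optimal $\zeta^*$, let $\gamma_i$ be the smallest grid value with $\gamma_i \ge \zeta^*_i$. There are two cases for each entry.
\begin{itemize}
	\item If $\zeta^*_i < \varepsilon B/k$, then $\gamma_i = \varepsilon B/k$. Over all such $i$, these entries contribute at most $\varepsilon B \le \varepsilon\OPT$ in total.
	\item Otherwise $\gamma_i \le (1+\varepsilon)\zeta^*_i$.
\end{itemize}
Hence $\sum_i \gamma_i \le (1+\varepsilon)\OPT + \varepsilon\OPT = (1+2\varepsilon)\OPT$. Running the scheme with $\varepsilon/2$ in place of $\varepsilon$ gives the claimed ratio and only changes the constant $C$.

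\textbf{Algorithm and correctness.} For each of the $|\Gamma|^k$ vectors $\gamma \in \Gamma^k$, we do the following:
\begin{enumerate}
	\item Form $G_\gamma$ by keeping exactly the edges with $c_i(e) \le \gamma_i$ for all $i$, in $O(kn^2)$ time.
	\item Test $G_\gamma$ for a perfect matching in $O(n^{2.5})$ time, as in \cref{lem:approx}.
	\item If one exists, compute its true cost $c(M)$.
\end{enumerate}
We return the cheapest matching found.

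Any matching $M$ found in $G_\gamma$ satisfies $c(M) \le \sum_i \gamma_i$. The optimal matching survives in $G_\gamma$ for the rounded vector $\gamma$ defined above, so the returned matching has cost at most $(1+2\varepsilon)\OPT$. The total running time is $O(|\Gamma|^k\cdot kn^{2.5})$, which is $O((C\log(k/\varepsilon)/\varepsilon)^k\cdot kn^{2.5})$.

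\textbf{Main obstacle.} The delicate point is keeping $|\Gamma|$ independent of $n$ and of the cost magnitudes. A purely multiplicative grid would need to reach down to arbitrarily small positive values. This is handled by anchoring at $B$, which sandwiches $\OPT$ within a factor of $k$, and by truncating the bottom of the grid at $\varepsilon B/k$. The truncation is safe because all $k$ entries together lose at most $\varepsilon B \le \varepsilon\OPT$ there. The remaining checks are routine: the bound on $|\Gamma|$ via $\ln(1+\varepsilon) \ge \varepsilon/2$, and the handling of the degenerate case $B = 0$.
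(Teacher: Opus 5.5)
Your proposal is correct and is essentially the paper's own proof: the same anchor $B$ (the bottleneck value for the scalar weight $\max_i c_i(e)$, giving $B\le\OPT\le kB$), the same geometric grid from $\varepsilon B/k$ up to $kB$, the same exhaustive search over the $|\Gamma|^k$ threshold graphs, and the same replacement of $\varepsilon$ by $\varepsilon/2$. The only slip is calling the computation of $B$ a lower-order term: for fixed $k$ and $\varepsilon$, the $O(n^{2.5}\log n)$ binary-search route is not absorbed into $O((C\log(k/\varepsilon)/\varepsilon)^k\cdot kn^{2.5})$, and the $O(kn^{4.5}\log N)$ route via \cref{lem:approx} with $d=1$ is far from it, so you should either use an $O(n^{2.5})$-time bottleneck bipartite matching algorithm (e.g., the $O(n\sqrt{nm})$ bound of Punnen--Nair) or state the preprocessing as an extra additive term, a point the paper's proof also passes over.
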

\begin{proof}
	Recall that, for a matching $M$, $\zeta(M)$ is its bottleneck vector and $c(M) = \sum_{i \in [k]}\zeta_i(M)$ is its cost.
	For a threshold vector $\theta \in \mathbb{R}^k$, let $G_\theta = (V, E_\theta)$ denote the \emph{threshold graph} with $E_\theta = \{ e \in E \mid \forall i \in [k]\;c_i(e) \leq \theta_i\}$.

	Our algorithm works as follows.
	Given our input graph $G$, we first compute an ordinary minimum bottleneck matching,
        assigning the maximum entry of each vector $\max_{j \in [k]} c_j(e)$ as cost to edge $e$.
	If the graph has no perfect matching, we report that and are done.
	Otherwise, let $B_0$ be the bottleneck value of this ordinary bottleneck matching, i.e.,
	the minimum value for which removing all edges with a higher cost entry still admits a perfect matching.
	If $B_0 = 0$, the bottleneck matching is clearly optimal for $\MSBM$; we return it and are done.

	Otherwise, let $\varepsilon' \coloneq \nicefrac{\varepsilon}{2}$, let $\gamma \coloneq \frac{\varepsilon' B_0}{k}$, and let $J \coloneq \lceil \log_{1+\varepsilon'}(\nicefrac{k^2}{\varepsilon'})\rceil$.
	We define the grid $\Gamma \coloneq \{ \gamma \cdot (1+\varepsilon')^j \mid j \in \mathbb{N}_0, 0 \leq j \leq J\}$.
	For every $\theta \in \Gamma^k$, we build $G_\theta$ and search for a perfect matching $M_\theta$ on it.
	If a perfect matching exists on $G_\theta$, we compute its cost $c(M_\theta)$.
	In the end, we return the cheapest matching $M_\theta$ found among all $\theta \in \Gamma^k$.

	To prove that this algorithm has all claimed properties, we make the following three observations.
	(O1)~Since the optimal solution for $\MSBM$ must contain an edge whose maximum cost entry is at least $B_0$, we have $B_0 \leq \OPT \leq k \cdot B_0$.
	(O2)~From the component-wise inequality $c(e) \leq \theta$ for all $e \in E_\theta$, we have $c(M_\theta) \leq \sum_{i\in [k]} \theta_i$.
	(O3)~If $M$ is a perfect matching and $\theta \geq \zeta(M)$ component-wise, then $G_\theta$ has a perfect matching.

	Observe that, if no perfect matching exists, our algorithm correctly reports this, so we can assume that a perfect matching exists.
	Let $M^*$ be an optimal perfect matching and $\zeta^* \coloneq \zeta(M^*)$ be its bottleneck vector.
	If $M^*$ has cost $\OPT = 0$, by (O1), we also find a matching of cost $0$, so we can now assume $\OPT \geq B_0 > 0$.

	By the choice of $J$, we have $\max \Gamma = \gamma \cdot (1+\varepsilon')^J \geq \gamma \cdot \nicefrac{k^2}{\varepsilon'} = kB_0 \geq \OPT \geq \zeta^*_i$ for all $i$;
	in other words, each entry of $\zeta^*$ is below some grid value from $\Gamma$.
	Together with the existence of $M^*$ and (O3), this guarantees that our algorithm finds and returns a perfect matching.
	Let $\hat{\theta}_i \in \Gamma, \hat{\theta}_i \geq \zeta^*_i$ be the smallest $\Gamma$-value that is at least $\zeta^*_i$, and let $\hat{\theta} \in \Gamma^k$ be the vector of these $k$ values.
	The existence of $M^*$ together with (O3) implies that $G_{\hat{\theta}}$ has a perfect matching $M_{\hat{\theta}}$;
	in the following, we argue that $M_{\hat{\theta}}$ yields the claimed approximation.

	First, we prove that $\hat{\theta}_i \leq (1+\varepsilon')\zeta^*_i + \gamma$ for all $i \in [k]$ by distinguishing two cases.
	In the first case, if $\zeta^*_i \leq \gamma$, then $\hat{\theta}_i = \gamma \in \Gamma$.
	Otherwise, we have $\gamma \cdot (1+\varepsilon')^{j-1} < \zeta^*_i \leq \hat{\theta}_i = \gamma \cdot (1+\varepsilon')^j$ for some $j \geq 1$,
	so we have $\hat{\theta}_i \leq (1+\varepsilon')\zeta^*_i$.
	By (O2), the cost $c(M_{\hat{\theta}})$ is at most the sum of thresholds $\sum_{i \in [k]} \hat{\theta}_i$.
	We thus have
	\[ c(M_{\hat{\theta}}) \leq \sum_{i \in [k]} \hat{\theta}_i \leq (1+\varepsilon')\sum_{i \in [k]} \zeta^*_i + k\gamma = (1+\varepsilon')\OPT + \varepsilon'B_0 \overset{\textup{(O1)}}{{}\leq{}} (1+2\varepsilon')\OPT = (1+\varepsilon)\OPT\text{,}\]
	proving the approximation guarantee.

	Finally, the runtime of our algorithm is dominated by the $|\Gamma|^k$ subgraph constructions and searches for perfect matchings, each of which can be done in $O(kn^{2.5})$ time.
	For $\varepsilon' > 1$, we have $\ln(1+\varepsilon') \in \Omega(1)$; for $\varepsilon' \in (0,1]$, we have $\ln(1+\varepsilon') \geq \varepsilon'/2$.
	We thus have $|\Gamma| = J+1 = \lceil \log_{1+\varepsilon'}\nicefrac{k^2}{\varepsilon'} \rceil + 1 = \lceil \ln(\nicefrac{k^2}{\varepsilon'})/\ln (1+\varepsilon') \rceil + 1 =
        O(\log(k/\varepsilon)/\varepsilon)$.
	Thus, for some sufficiently large absolute constant $C$, our runtime is $O((C\log(k/\varepsilon)/\varepsilon)^k \cdot kn^{2.5})$.
\end{proof}

In contrast to this positive result for parameter $k$, we can rule out any approximation scheme for parameter $Z$ (the target value) via the $\W[\SAT]$-hardness construction of \cref{th:wsatZ}.

\begin{theorem}\label{th:noEpasZ}
	Let $A(Z)$ and $T(Z)$ be any computable functions of $Z$.
	Unless $\W[1] = \FPT$, there is no $A(Z)$-approximation algorithm with runtime $T(Z) \cdot \operatorname{poly}(n,Z)$ for $\minMSBM$.
\end{theorem}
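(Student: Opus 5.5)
We argue by contradiction: from an $A(Z)$-approximation algorithm running in time $T(Z)\cdot\operatorname{poly}(n,Z)$ we build an $\FPT$ algorithm for a $\W[1]$-hard problem. The natural source is \textsc{Weighted Monotone Formula Satisfiability} restricted to a $\W[1]$-hard fragment, for example \textsc{Multicolored Clique} or \textsc{Independent Set} written as a monotone formula with parameter $q$. We feed this fragment through the reduction of \cref{th:wsatZ}. That reduction maps a monotone formula $F$ with parameter $q$ to an \minMSBM{} instance $G_F$ whose perfect matchings of cost $c$ correspond exactly to satisfying assignments with $c$ true variables. So $\OPT(G_F)$ equals the minimum weight of a satisfying assignment of $F$.

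The obstacle is that the runtime is governed by the target value $Z$, while an optimization algorithm is run on the instance alone. We handle this by supplying $Z=q$ as the parameter. The approximation algorithm is then allowed time $T(q)\cdot\operatorname{poly}(n,q)$. Its guarantee should be read as follows: if $\OPT\le q$, it returns a matching of cost at most $A(q)\cdot q$. We run it with a clock of $T(q)\cdot p(n,q)$ steps and abort if it exceeds this. The result is a matching whose cost certifies a satisfying assignment of weight at most $A(q)q$, or a proof that $\OPT>q$. This is only a gap version, so we need a source problem where gap-deciding is already $\W[1]$-hard for arbitrary computable gaps.

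The key step, and the one I expect to be hardest, is getting this gap hardness. I would invoke the parameterized inapproximability of \textsc{Monotone Circuit/Formula} min-weight satisfying assignment. Alternatively, I would use that \textsc{Minimum Monotone Satisfying Assignment} admits no $\FPT$-time $F(q)$-approximation unless $\W[1]=\FPT$; this follows from the known $\FPT$-inapproximability of \textsc{$k$-Clique} via Lin's / Karthik–Laekhanukul–Manurangsi results, or via the bounded-depth monotone circuit results of Marx and Chen et al. Concretely, I would reduce from a clique-type instance with $q$ small and produce a monotone formula whose optimum is $q$ in the YES case and larger than $A(q)\cdot q$ in the NO case. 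Because \cref{th:wsatZ} preserves the objective exactly and runs in polynomial time, the gap transfers verbatim to \minMSBM{}.

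Given the gap, the algorithm decides the source problem in time $T(q)\operatorname{poly}(|F|,q)$:
\begin{itemize}
\item if the returned matching has cost at most $A(q)q$, we answer YES;
\item if the algorithm fails, answers ``$\OPT>q$'', or returns a costlier matching, we answer NO.
\end{itemize}
This is $\FPT$, which gives $\W[1]=\FPT$. The routine checks are that the formula sizes stay polynomial, so the reduction is an $\FPT$-reduction, and that $Z=q$ is bounded by a function of the source parameter.
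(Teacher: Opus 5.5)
Your proposal follows the paper's proof: the reduction of \cref{th:wsatZ} preserves the optimum exactly, with $Z=q$ playing the role of the parameter, so an $A(Z)$-approximation would pull back to an $\FPT$-time $A(q)$-approximation for a minimum-weight monotone formula problem, contradicting a known total $\FPT$-inapproximability result. The paper makes the source concrete by observing that $q$-\textsc{Dominating Set} is literally the monotone CNF $\bigwedge_{v\in V}\bigvee_{u\in N[v]}x_u$ with the same optimum, so the Karthik--Laekhanukul--Manurangsi theorem applies verbatim; note that this theorem is about dominating set, not clique, and you should cite it (or Marx's bounded-depth monotone-circuit result) rather than a constant-factor $k$-\textsc{Clique} bound such as Lin's, which would not by itself give an arbitrary computable gap $A(q)$.
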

\begin{proof}
	We begin by observing that the reduction from the proof of \cref{th:wsatZ} exactly preserves the optimal value:
	in the forward direction, an assignment with $w$ positive variables is turned into a matching of cost $w$,
	and in the backward direction, a matching of cost $w$ results in an assignment with $w$ positive variables.
	The $q$-\textsc{Dominating Set} problem can be expressed as an instance of $q$-\WMFS{} (as $\bigwedge_{v \in V} \bigvee_{u \in N[v]} x_u$),
	also preserving the value of an optimal solution.
	Thus, via this reduction, an $A(Z)$-approximation algorithm with runtime $T(Z) \cdot p(n,Z)$ for some polynomial $p$
	for $\minMSBM{}$ directly yields an $A(q)$-approximation algorithm with runtime $T(q) \cdot p'(n, q)$ for some other polynomial $p'$ for $q$-\textsc{Dominating Set}.
	However, Karthik et al.~\cite{DBLP:journals/jacm/SLM19} show that such an algorithm can only exist if $\W[1] = \FPT$.
\end{proof}

% !TeXroot = main.tex
\section{Conclusion and future work}
We gave a parameterized complexity analysis for a generalization of a matching problem from~\cite{richey1992minimum}.
Our results include $\W[\P]$-membership and $\W[\SAT]$-hardness when parameterized by either the length of the cost vector or the maximum cost.
On the algorithmic side, we showed that the problem is fixed-parameter tractable when parameterized by both values simultaneously.
Notably, the hardness results also hold for planar graphs with maximum degree three, demonstrating that the hardness does not arise from arbitrary edge crossings or highly connected nodes.

Future work includes investigating the complexity of the problem for non-bipartite graphs in greater detail.
Because we did not use the fact that $G$ is bipartite in~\Cref{th:wp-membership-k,th:wp-membership-Z}, this problem remains in $\W[\P]$ for both parameters.
If the problem is $\W[\P]$-complete, this may be easier to prove for general graphs than for bipartite graphs.

Variants of the objective function or cost function can also be considered:
For reconfiguration problems, the cost values contain geometric information.
Thus, one could require every entry $c_i$ of the cost function to be induced by a metric $d_i$ on the node set, i.e., $c_i(\{u,v\}) = d_i(u,v)$.
As in~\cite{hassin2007complexity}, instead of minimizing the cost of the matching, future work could also investigate maximizing the cost of the matching.

On the approximation side, we showed super-logarithmic approximation hardness and provided a $\nicefrac k d$-approximation, as well as an EPAS with parameter $k$.
Furthermore, we ruled out similar approximation schemes for $Z$.
Future work could address the gap between our approximation algorithm and the lower bound.

One could also consider other graph problems such as minimum spanning trees or minimum $s$-$t$-cuts under the sum-of-bottlenecks objective.
It would be interesting to see whether complexity results carry over to all these variants or if some are harder or easier to solve.
For $s$-$t$-paths, our reduction suggests $\W[\SAT]$-hardness for each of the parameters $k$ and $Z$.
However, using the number of edges in a shortest path as a parameter may also be worth investigating.

\section*{AI disclosure}
We used LLMs such as OpenAI GPT-5.6 Sol in the preparation of this work for idea discussions, language editing, formatting and polishing proofs.
We did not copy any autonomously generated text into our paper, and the paper was entirely written by us.

\bibliographystyle{plainurl}
\bibliography{references}

\end{document}